\newtheorem{theorem}{Theorem}[section]
\newtheorem{lemma}[theorem]{Lemma}
\newtheorem{algorithm}{Algorithm}{\itshape}{\rmfamily}
\newtheorem{example}[theorem]{Example}
 \theoremstyle{remark}
\newtheorem{clm}{Claim}{\itshape}{\rmfamily}
\newtheorem{remark}[theorem]{Remark}
\newtheorem{observation}{Observation}{\itshape}{\rmfamily}
\newtheorem{procedure}{Procedure}{\itshape}{\rmfamily}
\def\section{\@startsection {section}{1}{\z@}{-3.5ex plus -1ex minus
 -.2ex}{2.3ex plus .2ex}{\large\bf}}
\def\bfm#1{\mbox{\boldmath$#1$}}
\def\0{\bfm 0}
\newcommand{\comment}[1]{\textcolor{blue}{\textbf{#1}}}
\DeclareMathAlphabet{\mathpzc}{OT1}{pzc}{m}{it}
\def\bfm#1{\mbox{\boldmath$#1$}}
\begin{document}


\title{\bf How to Schedule the Marketing of Products with\\ Negative Externalities \thanks{Supported in part by NNSF of China under Grant No. 11222109, 11021161, 10928102{ and 71101140}, 973 Project of China under Grant No. 2011CB80800 {and 2010CB731405}, and CAS under
Grant No. kjcx-yw-s7.}}


 \author{Zhigang Cao  \and Xujin Chen
 \and
 Changjun Wang}
\date{Academy of Mathematics and Systems Science \\ Chinese Academy
of Sciences, Beijing 100190, China\\
${}$\\
\mailsa}

\maketitle

\begin{abstract} In marketing  products with negative externalities, a schedule which specifies an order of consumer  purchase decisions  is crucial, since  in  the  social network of consumers, the decision of each consumer is  negatively affected by the choices of her neighbors.
In this paper, we study the problems of finding a marketing schedule for     two asymmetric products with negative externalites. The goals  are
 two-fold:
 maximizing the sale  of one product and
  ensuring regret-free purchase decisions.
 We show that the maximization is NP-hard, and
 provide efficient algorithms with satisfactory performance guarantees.  Two of these algorithms give regret-proof schedules, i.e. they reach   Nash equilibria where no consumers regret   their previous decisions.
Our work is the first attempt to address these marketing problems from an algorithmic point of view.
\end{abstract}

\noindent{\bf Keywords:} Negative externality, Social network, Nash equilibrium, Efficient algorithm, Marketing schedule

\newcounter{my}
\newenvironment{mylabel}
{
\begin{list}{(\roman{my})}{
\setlength{\parsep}{-0mm}
\setlength{\labelwidth}{8mm}
\setlength{\leftmargin}{8mm}
\usecounter{my}}
}{\end{list}}

\newcounter{my2}
\newenvironment{mylabel2}
{
\begin{list}{(\alph{my2})}{
\setlength{\parsep}{-1mm} \setlength{\labelwidth}{12mm}
\setlength{\leftmargin}{14mm}
\usecounter{my2}}
}{\end{list}}

\newcounter{my3}
\newenvironment{mylabel3}
{
\begin{list}{(\alph{my3})}{
\setlength{\parsep}{-1mm}
\setlength{\labelwidth}{8mm}
\setlength{\leftmargin}{10mm}
\usecounter{my3}}
}{\end{list}}

\section{Introduction}
The total value of any (consumer) product can be roughly classified into three parts: physical value, emotional value, and social value \cite{vn10}. With the fast development of economy, the basic physical needs of more and more consumers are easily met. Consequently, people increasingly
shift their attention  to emotional and social values when they consider whether to buy a product.  In particular, the social value, whose amount is not
determined by what a consumer consumes alone or how she personally enjoys it, but by the comparisons with what other people around her consume, is becoming a
more and more crucial ingredient for both consumer purchase and therefore seller marketing. For many products, whether they will be welcome depends mainly on how much social value they can provide to the
consumers. This is especially true for fashionable and luxury  goods, where the products often exhibit {\em negative} ({\em consumption}) {\em externalities} -- they become less valuable as more people use them \cite{a05,hs00}.

 The comparison that a consumer makes, for calculating the social value of a product, is naturally restricted to her neighbors in the  social network. 
 For a consumer, the social value of a product with negative externality 
is often proportional to the number of her 
neighbors who do not consume this product \cite{hs00}. In the market, the purchase decisions of a consumer 
often depend  on the values of the products at the time they are  promoted -- the product of larger value will be selected. In contrast to the physical and emotional values, which are relatively fixed, 
the social values of products vary with different marketing schedules. 
The goal of this paper is to design   good
marketing schedules for promoting products with negative externalities in social networks. 


\bigskip\noindent{\bf Motivation and related work}\quad
Our study is motivated by the practical marketing problem concerning how to bring the products to consumers' attention over time. 
Among 
a large literature on diffusion of competing products  or opinions in social networks (see e.g., \cite{am11,ckp12,gk12} and references therein), 
Chierichetti, Kleinberg and Pancones  \cite{ckp12} recently studied the scheduling aspect of the diffusion problem on two products 
 -- finding an order of  consumer purchase decision making  to maximize the adoption of one product. In their model, the two competing  products both have positive (consumption) externalities and every consumer follows the majority of her social network neighbors
 when the externalities outweigh  her own internal preference. The authors \cite{ckp12} provided an algorithm that ensures an expected linear number of favorable decisions.

 The network-related consumption externalities have been classified into four categories \cite{ct05}. Comparing to the other three, the negative cross-consumer externality, as considered in this paper, has been far less studied \cite{a05,hs00}, and was emphasized for its importance in management and marketing nowdays \cite{ct05}.

The model studied in this paper can also be taken as an extension of one side of {\em the fashion game}, which was formulated by Jackson \cite{j08}. Very interestingly, people often have quite different, in fact almost opposite, opinions on what is fashionable, e.g., ``Lady Gaga is Godness of fashion''  vs ``This year's fashion color is black''. 
Following Jackson, 
 we call consumers 
holding the former ``personality reflection''  idea of fashion {\em rebels} and the latter ``prevailing style'' idea {\em conformists}.  More generally, a consumer behaves like a rebel (conformist) if the product, from her point of view, has negative (positive) externality.
In an era emphasizing personal identities, more and more consumers
would like to be rebels. For example, they would prefer Asian-style pants, when seeing many friends and colleagues (their social network neighbors)
wearing European-style.  However, the rebel social network is still under-researched in comparison with vast literature on conformist social networks. 
For a market where all the consumers are rebels, as considered in this paper, it has been previously studied by several papers under the term
of {anti-coordination} \cite{b07,cy12}. 

\bigskip\noindent{\bf Model formulation}\quad
The market is represented by  a social network $G=(V,E)$, an undirected graph with node set $V$ consisting of  $n$ consumers and link set $E$  of $m$   connections between consumers.
  A seller has two (types of) products $\mathcal Y$ and $\mathcal N$ with similar functions.
    We abuse notations by using $\mathcal Y$ and $\mathcal N$ to denote both   types and   products.

 The marketing is done sequentially: 
The seller 
is able to ask the   consumers
 one by one whether they are more interested in $\mathcal Y$ or in $\mathcal N$. Each consumer buys (chooses) exactly one of $\mathcal Y$ and $\mathcal N$, whichever provides her a larger total value, only at the time  she is asked. {This is a simplification of the so called {\em precision marketing} \cite{zb04}.}
 For every  consumer, a product of type $\mathcal T\in\{\mathcal Y,\mathcal N\}$ provides  her   with total value $p_{{}_{\mathcal T}}+s_{{}_{\mathcal T}}(x_{{}_{\mathcal T}})$, where   $p_{{}_{\mathcal T}}$ is the sum of physical and emotional values, and $s_{{}_{\mathcal T}}(x_{{}_{\mathcal T}})$ is the social value   determined by  decreasing function $s_{{}_{\mathcal T}}(\cdot)$ and the number $x_{{}_{\mathcal T}}$ of her neighbors who have bought product $\mathcal T$. We assume that $\mathcal Y$ is very similar to $\mathcal N$ with $p_{{}_{\mathcal Y}}>p_{{}_{\mathcal N}}$ and the externality outweighs the physical and emotional difference, i.e., for any permutation $\mathcal T,\mathcal F$ of $\mathcal Y,\mathcal N$ and any nonnegative integers $x,y$ ($x<y$)  we have $  s_{{}_{\mathcal T}}(x)-s_{{}_{\mathcal F}}(x)<p_{{}_{\mathcal Y}}-p_{{}_{\mathcal N}}< s_{{}_{\mathcal T}}(x)-s_{{}_{\mathcal F}}(y)$.

Actually, the above model can be summarized as  the following scheduling problems on rebel social networks.

\medskip
\noindent{\em {Rebels.}}\quad {Every consumer is a rebel  who,}
at her turn to choose from $\{\mathcal Y,\mathcal N\}$, will buy the product different from the one currently possessed
by the majority of her neighbors. If there are equal numbers of neighbors having bought $\mathcal Y$ and $\mathcal N$ respectively, the consumer will
always buy $\mathcal Y$.

\medskip
\noindent{\em Scheduling.}\quad  A (marketing)
{\em schedule}  $\pi$, 
for network $G$ is an ordering  of consumers in $V$  which specifies the order $\pi(v)\in\{1,2,\ldots,n\}$ of consumer $v\in V$ being asked to buy (choose) $\mathcal Y$ or $\mathcal N$, or ``being scheduled'' for short. {We refer to the problem of finding a schedule for a rebel social network as the {\em rebel scheduling problem}}. Given schedule $\pi$, the choice (purchase {\em decision}) of each consumer $v$ under $\pi$ is uniquely determined, and we denote it by $\pi[v]$, which belongs to $\{\mathcal Y,\mathcal N\}$.  The decisions of all consumers form the marketing {\em outcome} $(\pi[v]:v\in V)$ of $\pi$.
{The basic goal of the rebel scheduling problem} is to  find a   schedule whose outcome contains    $\mathcal Y$ (resp.  $\mathcal N$) decisions as many as possible because $\mathcal Y$ (resp. $\mathcal N$) is more profitable for the seller. 

\medskip
\noindent{\em Equilibrium.}\quad As seen above, the  value of a product changes as the marketing proceeds.  {Every schedule corresponds to  a dynamic game among consumers. We assume that consumers behave naively  without predictions.}
A natural question is: {Can these simple behaviors (or equivalently, a schedule)} 
eventually lead to a Nash equilibrium {-- a state where no consumer regrets her previous decision? This question is of both theoretical and practical interests. Schedules that lead  to Nash equilibria are called {\em regret-proof}; they guarantee  high consumer satisfaction, which is beneficial to the seller's future marketing.} 

\bigskip\noindent{\bf Results and contribution}\quad
We prove {that it is NP-hard to find} a marketing schedule that maximizes the number of   $\mathcal Y$ (resp. $\mathcal N$) decisions.  Complementary
to the NP-hardness, we  {design $O(n^2)$-time  algorithms for finding schedules that guarantee at least $n/2$ decisions of $\mathcal Y$, and at least
$n/3$ decisions of $\mathcal N$, respectively.}  The numbers $n/2$ and $n/3$   are best possible for any algorithm.  {Let $\alpha$ denote the size of
maximum independent set of $G$. We show that   regret-proof schedules that guarantee  at least $n/2$ decisions of $\mathcal Y$ and at least
$\max\{\sqrt{n+1}-1, (n-\alpha)/2\}$}  decisions of $\mathcal N$, respectively, can be found in time $O(mn^2)$. In contrast,
decentralized consumer choices without a schedule might result in an arbitrarily worse outcome. This can be seen from the star network, where in the
worst case only one consumer chooses the product consistent with the  seller's  objective.

To the best of our knowledge, this paper  is the first attempt to address  the scheduling problems for marketing products with negative externalities (i.e  marketing in rebel social
networks).  Our algorithms for maximizing the number
of $\mathcal Y$ decisions can be extended to deal with the case of promoting one product
where $\mathcal Y$ and $\mathcal  N $ are interpreted as buying and not buying, respectively.

\section{Maximization}\label{sec:maxprofit}
 We study the rebel scheduling problem to maximize seller's profits in Subsections \ref{sec:maxy} and \ref{sec:maxn}, respectively, for the cases of $\mathcal Y$ and $\mathcal N$ having higher net profits.

Throughout we consider      $G=(V,E)$ a connected rebel social network   for which we have $n=O(m)$.  All results can be extended to any network without isolated nodes.  Let   $\pi$ be a schedule for $G$, and $u,v\in V$. We say that $\pi$ schedules  $v\in V$ with decision $\pi[v]\in\{\mathcal Y,\mathcal N\}$, and $\pi$ schedules $u$ before $v$ if $\pi(u)<\pi(v)$.

\subsection{When $\mathcal Y$ is more profitable}\label{sec:maxy}

 It is desirable to find an optimal schedule that maximizes the number of consumers purchasing $\mathcal Y$. Although this turns out to be a very hard task (Theorem \ref{yhard}), we can guarantee that at least half of the consumers choose $\mathcal Y$ (Theorem \ref{lem}). 

\begin{theorem}\label{yhard}
The {rebel scheduling problem for maximizing the number of $\mathcal Y$ decisions} is NP-hard.\end{theorem}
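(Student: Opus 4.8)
The plan is to establish hardness by a polynomial-time reduction from a known NP-hard problem. The central feature to exploit is the rebel rule together with the tie-breaking convention: a consumer $v$ receives decision $\mathcal Y$ precisely when, at the moment she is scheduled, the number of her already-scheduled neighbors holding $\mathcal N$ is at least the number holding $\mathcal Y$; she is forced to $\mathcal N$ only when a strict $\mathcal Y$-majority of her scheduled neighbors has already committed. This makes $\mathcal Y$ the ``default'' choice and an $\mathcal N$ decision a scarce resource that the schedule must deliberately manufacture by committing $\mathcal Y$-neighbors first. First I would record this local criterion as a lemma, since it converts the dynamic process into a static, order-dependent counting condition on each node's neighborhood prefix.

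Next I would design the reduction around gadgets whose qualitative behavior is essentially forced regardless of the surrounding schedule. The natural building blocks are independent-set-like clusters, all of whose members can be made $\mathcal Y$ simultaneously by scheduling them before any common neighbor, and high-degree ``sink'' nodes that are driven to $\mathcal N$ once enough of their neighbors have committed $\mathcal Y$. The instance graph $G$ is assembled from a source instance (an independent-set or $3$-SAT type problem) so that a global $\mathcal Y$-count above a prescribed threshold is attainable if and only if the source instance is a yes-instance: selection gadgets encode the combinatorial choices, while conflict gadgets use sinks together with the tie-break to penalize an inconsistent choice by converting an extra node from $\mathcal Y$ to $\mathcal N$.

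I would then prove the two directions. For the forward direction I exhibit an explicit schedule: process the selected independent parts first so they default to $\mathcal Y$, then release the sink and conflict nodes in an order that forces exactly the intended $\mathcal N$ decisions, and count to meet the threshold. For the reverse direction I argue that any schedule reaching the threshold must, by the local criterion and a counting argument over the gadget edges, realize a consistent assignment in the source instance; here the robustness of the gadgets---that their behavior does not depend on how the schedule interleaves them---does the heavy lifting.

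The main obstacle, and where most of the care is needed, is exactly this order-dependence: because each decision depends on the entire prefix of a node's scheduled neighbors rather than on a fixed neighborhood state, a naive gadget can be subverted by an adversarial interleaving, and the tie-break at equality can flip a node's decision. The crux is therefore to engineer gadgets---typically by padding with many auxiliary leaves so that a majority is decided by a single pivot neighbor, and by making the intended $\mathcal Y$-clusters genuinely independent---whose outcome is invariant under every schedule consistent with the threshold, and to show that no schedule can ``borrow'' manufactured $\mathcal N$-neighbors to push extra nodes to $\mathcal Y$ beyond what a valid source solution permits.
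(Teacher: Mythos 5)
Your opening observation is correct and matches the paper's starting point: under the rebel rule with ties going to $\mathcal Y$, a node receives $\mathcal Y$ iff, at the moment it is scheduled, its already-scheduled $\mathcal N$-neighbors are at least as numerous as its already-scheduled $\mathcal Y$-neighbors, so $\mathcal Y$ is the default and $\mathcal N$ must be manufactured. Your general strategy (reduce from an independent-set-type problem, pad with leaves so that a single pivot neighbor decides the majority) is also the strategy the paper actually uses. But the proposal stops at the level of intent: no concrete gadget is fixed, no threshold is stated, and neither direction of the reduction is argued, so as written it is a plan rather than a proof. For comparison, the paper's construction is very specific: to each node $u$ of the source graph $H=(N,F)$ it attaches exactly $d(u)$ pendant leaves, so that $u$ has $d(u)$ leaf neighbors and $d(u)$ original neighbors; if all of $u$'s leaves precede $u$ with $\mathcal Y$, then $u$ can receive $\mathcal Y$ only if \emph{all} $d(u)$ of its original neighbors have already committed to $\mathcal N$ (the paper's Claim 1), whence the $\mathcal Y$-labelled original nodes form an independent set and the optimum equals $2|F|+\alpha(H)$.

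The genuine gap is the step your last paragraph waves at with ``the robustness of the gadgets does the heavy lifting.'' Nothing a priori prevents an optimal schedule from \emph{sacrificing} some padding leaves --- scheduling them after their neighbor so that they end up with $\mathcal N$ --- in exchange for extra $\mathcal Y$ decisions on a non-independent set of original nodes; if that could happen, the correspondence between the optimum of $G$ and $\alpha(H)$ breaks and the reduction fails. The paper closes this hole with an explicit exchange argument (its Claim 2): given an optimal schedule in which some leaf follows its neighbor, one takes the \emph{last} offending non-pendant node $w$ and produces another optimal schedule with strictly fewer offending leaves, either by moving $w$'s late leaves to the front (when $\pi[w]=\mathcal N$) or by moving $w$ itself to the very end (when $\pi[w]=\mathcal Y$), and then verifies node by node that no subsequent decision flips unfavorably. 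This normalization is the technical heart of the hardness proof, and it is exactly the piece your proposal identifies as the crux but does not supply; ``engineer gadgets whose outcome is invariant under every schedule'' is an aspiration, not an argument, and for this construction the invariance is in fact false for arbitrary optimal schedules --- it only holds after the exchange argument has been carried out.
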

\begin{proof}We prove by reduction from the {\em maximum independent set problem}. Given any instance  of the maximum independent set problem on connected graph {$H=(N,F)$}, by adding some pendant nodes to $H$
we construct in polynomial time a network $G$ (an instance of the rebel scheduling problem): For each node $u\in N$ with degree $d(u)$ in $H$, we add
a set $P_u$ of $d(u)$ nodes, and connect each of them to $u$. The resulting network $G=(V,E)$ is specified by $V:=N\cup (\cup_{u\in N}P_u)$ and
$E:=F\cup (\cup_{u\in N}\{up:p\in P_u\})$, where each node in $V\setminus N=\cup_{u\in N}P_u$ is {\em pendant}, and each node $u\in N$ is {\em
non-pendant} and has exactly $2d(u)$ neighbors: half of them are non-pendant nodes in   $N$ and the other half are the $d(u)$ pendant nodes in  $P_u
$.

We associate every  schedule $ {\pi}$ for $G$ with an integer $\theta( {\pi})$, equal to the number of pendant nodes
which are scheduled (by $ {\pi}$) after their  unique  neighbors. Clearly
\begin{gather}\label{theta}
\theta(\pi)\le|V\setminus N|=2|F| \text{ for any schedule }\pi\text{ of }G.\end{gather}
\begin{clm}\label{clm1}
 For any $u\in N$ and any schedule $\pi$ of $G$, if $\pi$ schedules all nodes in $P_u\cup\{u\}$ with $\mathcal Y$, then (all the $d(u)$ pendant neighbors of $u$  in $P_u$
 have to be scheduled before $u$ with decisions $\mathcal Y$, therefore) all the $d(u)$ non-pendant neighbors of $u$
 have to be scheduled with $\mathcal N$ before $u$ is scheduled.
\end{clm}

Consider $\pi$ being an optimal schedule for $G$. If $\theta(\pi)=0$, then $\pi$ schedules all pendant nodes before their neighbors, and  hence all of these pendant nodes choose $\mathcal Y$.  It follows from Claim \ref{clm1} that $\{ v\in N:\pi[v]=\mathcal Y\}$ is an independent set of $H$. Since $\pi$ is optimal, the independence set is maximum in $H$. Thus, in view of (\ref{theta}), to prove the theorem, it suffices to show the following.

\begin{clm}\label{clm2}
Given an optimal schedule $\pi$ for $G$ with $\theta(\pi)>0$, another   optimal schedule $\pi'$ for $G$ with $\theta(\pi')<\theta(\pi)$ can be found in polynomial time.
\end{clm}
Since $\theta(\pi)>0 $, we can take {$w\in N $} to be the {\em last} non-pendant node   scheduled by $\pi$ earlier than some of its pendant neighbors.
  Under $\pi$, let $P_w'$ ($\emptyset\ne P_w'\subseteq P_w$) be the set  of all pendant neighbors of $w$ that are scheduled after $w$, let $U$ be the set of non-pendant nodes scheduled after $w$, and let $P_U$ be the set of the pendant nodes whose (non-pendant)   neighbors belong to $U$ 
  (possibly $U=\emptyset=P_U$). The choice of $w$ implies that $\pi$ schedules every node in $P_U$   before its   neighbor. Without loss of generality we may assume that under~$\pi$,
\begin{itemize}
\vspace{-2mm}\item (Pendant) nodes   in $P_U$ are scheduled before all other nodes (with  $\mathcal Y$).  
\vspace{-2mm}\item (Pendant) nodes in $P_w'$ are scheduled immediately after $w$ one by one. 
\vspace{-2mm}\item (Non-pendant) nodes in $U$ are scheduled at last.\vspace{-2mm}
\end{itemize}

If $\pi$ schedules $w$ with $\mathcal N$, then at later time it schedules all pendant nodes in $P'_w$ with $\mathcal Y$. 
{Another optimal schedule $\pi'$ (for $G$) with the same outcome as $\pi$ can be constructed as follows: $\pi'$  schedules nodes in $P'_w$ first}, and then schedules other nodes of $V$ in a relative order the same as $\pi$. 
Clearly, $\pi'$ with {$\theta(\pi')\le\theta(\pi)-1$} is the desired schedule.
It remains to consider the case where $\pi$ schedules  $w$ with
\begin{gather}\label{y}
\pi[w]=\mathcal Y.
 \end{gather}
 {It follows that $\pi[p]=\mathcal N$ for all $p\in P_w'$.} Let $\pi'$ be the schedule that first schedules nodes of $V\setminus\{w\}$ in a relative order the same as $\pi$, and  schedules $w$ finally. It is clear that $\theta(\pi')\le\theta(\pi)-1$ {and $\pi'[p]=\mathcal Y$ for all $p\in P_w'$}. We only need to show that $\pi'$ is optimal.

Observe that $\pi'$ first schedules  every $v\in V$ satisfying $\pi(v)<\pi(w)$ with the same decision as in $\pi$ (particularly, all nodes in $P_U$ are scheduled with $\mathcal Y$). {Subsequently, $\pi'$ schedules nodes in $P_w'$ and $ U$ in the same relative order as $\pi$}
Finally $\pi'$ schedules $w$.  Since all pendant nodes in $P'_w $ ($\ne\emptyset$) are scheduled {by $\pi'$ with $\mathcal Y$, and by $\pi$ with $\mathcal N$} 
the  optimality of $\pi'$ would follow if $\pi'$ schedules every node of $U$ with the same decision as $\pi$.

Suppose it were not the case. Let {$u\in U\subseteq N $} be the earliest node in $U$ scheduled by $\pi'$ with a decision $\pi'[u]$ different from $\pi[u]$. It must be the case that 
{$w$ is a non-pendant neighbor of $u$ and $\pi[w]\ne\pi[u]$}. At the time $\pi'$ schedules $u$, all pendant neighbors of $u$ in $P_u\subseteq P_U$ have been scheduled with $\mathcal Y$ and the non-pendant neighbor $w$ has not been scheduled,  it follows from Claim \ref{clm1} that $\pi'[u]=\mathcal N$. {As $\pi[u]\ne\pi'[u]$ and $\pi[w]\ne\pi[u]$, we have $\pi[w]=\pi'[u]=\mathcal N$,} 
 a contradiction to (\ref{y}). The optimality of {$\pi'$} is established, which proves
Claim~\ref{clm2} and therefore Theorem~\ref{yhard}.
\end{proof}

We next design an algorithm for finding a schedule that ensures  at least $n/2$ decisions of $\mathcal Y$. 
The algorithm iteratively constructs
 a node set $A$ for which there exist two schedules $\pi'$ and $\pi''$  scheduling each node in $A$ with different decisions.
  In the end, at least  half nodes of $A$ can be scheduled by either $\pi'$ or $\pi''$ with $\mathcal Y$ decisions.
 Subsequently,  the nodes outside $A$, which form an independent set, will all choose $\mathcal Y$ (in an arbitrary order).

\begin{algorithm}\label{al1}
{\em Input}: 
{Network} $G=(V,E)$. {\em Output}: {Partial schedule $\pi$ for $G$.}
\end{algorithm}
\vspace{-0mm}\hrule\begin{enumerate}
\vspace{-0mm}\item Initial setting: $A\leftarrow\emptyset, t\leftarrow 1$, $\pi'\leftarrow$ a null schedule

 \vspace{-2mm} \item \textbf{While} $\exists$ $w\in V\setminus A$ which has different numbers of neighbors  in $A$ choosing $\mathcal Y$ and $\mathcal N$ respectively under $\pi'$ {\bf do}
  \label{findnode}

 \vspace{-2mm}\item  \hspace{3mm}  {\sc schedule $w$:}  $\pi'(w)\leftarrow t$, $\pi''(w)\leftarrow t$;\label{schedulenode}

  \vspace{-1mm} \hspace{3mm}  $A\leftarrow A\cup \{w\}$,  $t\leftarrow t+1$

\vspace{-2mm}\item    \textbf{End-while}\label{endnode}
\vspace{-2mm}  \item {\bf If}  $\exists$ $uv\in E$ with $u,v\notin A$  \label{findedge}

 \hspace{1mm}  {\bf then}
{\sc schedule $uv$:} $ \pi'(u)\leftarrow t,\pi'(v)\leftarrow t+1,\pi''(v)\leftarrow t,\pi''(u)\leftarrow t+1$;\label{scheduleedge}

 \hspace{10mm}  $A\leftarrow A\cup \{u,v\}$,  $ t\leftarrow t+2$;

  \hspace{10mm} Go back to Step \ref{findnode}.

  \item Let $\pi$ be $\pi'$ or $\pi''$ whichever schedules {more nodes with $\mathcal Y$ (break tie arbitrarily)} 
      \label{takebetter}


\vspace{2mm}
\hrule
\end{enumerate}

For convenience, we reserve symbol ``{\sc schedule}'' for the scheduling {(constructing $\pi$ and $\pi''$)} as conducted at Steps \ref{schedulenode} and \ref{scheduleedge} in Algorithm \ref{al1}. Similarly, we also say    ``{\sc schedule} a node'' and  ``{\sc schedule} an edge'' with the implicit understanding that the node and the edge satisfy the conditions in Step \ref{findnode} and Step \ref{findedge} of Algorithm~\ref{al1}.
\begin{clm}\label{clm3}
 $\pi'[v]=\mathcal Y$ if and only if
$\pi''[v]=\mathcal N$ for all $v\in A$.
\end{clm}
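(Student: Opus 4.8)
The plan is to prove the claim by induction on the number of \textsc{schedule} operations performed by Algorithm~\ref{al1}, maintaining throughout the single invariant that every node already placed in $A$ receives opposite decisions under $\pi'$ and $\pi''$. Since each decision lies in $\{\mathcal Y,\mathcal N\}$, the asserted equivalence $\pi'[v]=\mathcal Y \iff \pi''[v]=\mathcal N$ is the same as $\pi'[v]\ne\pi''[v]$, so I will establish the latter for all $v\in A$. The base case $A=\emptyset$ holds vacuously. Let $A_0$ denote the value of $A$ just before a given \textsc{schedule} operation.

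The first key observation is that the nodes of $A_0$ occupy positions $1,\dots,t-1$ in both $\pi'$ and $\pi''$ — the same set, though possibly in a different internal order — so $A_0$ is exactly the already-scheduled set in each schedule at that moment. Because a rebel's decision depends only on the \emph{decisions}, not the order, of her already-scheduled neighbors, and because the induction hypothesis flips every decision on $A_0$, any node $w$ about to be scheduled has its count of $\mathcal Y$-neighbors in $A_0$ under $\pi'$ equal to its count of $\mathcal N$-neighbors in $A_0$ under $\pi''$, and symmetrically. This lets me transfer information between the two schedules.

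For a node $w$ scheduled inside the while loop (Step~\ref{schedulenode}), $w$ gets the same position $t$ in both schedules, so in each its set of already-scheduled neighbors is precisely the neighbors of $w$ lying in $A_0$. The loop guard forces the $\mathcal Y$- and $\mathcal N$-counts of $w$ among $A_0$ under $\pi'$ to be \emph{unequal}; by the observation above, the two schedules present $w$ with exactly these two counts swapped. Hence the tie-break rule is never invoked and the rebel rule yields opposite decisions for $w$, preserving the invariant.

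The main obstacle is the edge step (Step~\ref{scheduleedge}), where the endpoints $u,v$ of an edge are placed at positions $t,t+1$ but in opposite orders in $\pi'$ and $\pi''$. Since $u$ and $v$ are themselves adjacent, whether one endpoint counts as an already-scheduled neighbor of the other differs between the two schedules, so the ``same neighbor set'' argument does not apply directly. To handle this I would use that the edge step is entered only once the while loop has terminated, so under $\pi'$ both $u$ and $v$ have equal $\mathcal Y$- and $\mathcal N$-counts among $A_0$; by the flipping observation the same ties hold under $\pi''$. Tracking the within-pair order then gives the following: in either schedule the endpoint scheduled first faces a pure tie among $A_0$ and, by the tie-break rule, chooses $\mathcal Y$, while the endpoint scheduled second additionally sees that $\mathcal Y$, tipping its neighborhood to a $\mathcal Y$-majority and, by the rebel rule, choosing $\mathcal N$. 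As $\pi'$ schedules $u$ first and $v$ second whereas $\pi''$ reverses this, I obtain $\pi'[u]=\mathcal Y,\ \pi'[v]=\mathcal N$ and $\pi''[v]=\mathcal Y,\ \pi''[u]=\mathcal N$; both endpoints again receive opposite decisions, so the invariant persists. Completing the induction establishes the claim for every $v\in A$.
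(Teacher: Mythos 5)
Your proposal is correct and follows essentially the same inductive argument as the paper: the while-loop case uses the swapped, unequal neighbor counts, and the edge case uses the tie at both endpoints plus the within-pair order reversal (your write-up just spells out the tie-break reasoning that the paper leaves implicit in ``It follows from $uv\in E$ that \dots'').
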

\begin{proof}
 The algorithm enlarges $A$ gradually at Steps \ref{schedulenode} and \ref{scheduleedge}, producing a sequence of node sets $A_0=\emptyset$, $A_1$, \ldots, $A_{\ell}=A$. We prove by induction on $k$ that  $\pi'(v)=\mathcal Y$ if and only if
$\pi''(v)=\mathcal N$ for all $v\in A_k$, $k=0,1,\ldots,\ell$. The base case of $k=0$ is trivial.

Suppose that $k\ge 1$ and the statement is true for $A_{k-1}$. In case of $A_k$ being produced at Step \ref{findnode}, suppose $w$ has $n_1$ (resp.
$n_2$) neighbors in $A_{k-1}$ choosing $\mathcal Y$ (resp. $\mathcal N$) under $\pi'$. By hypothesis, $w$ has $n_1$ (resp. $n_2$) neighbors in
$A_{k-1}$ choosing $\mathcal N$ (resp. $\mathcal Y$) under $\pi''$. Since $n_1\ne n_2$, we see that $\pi'[w]=\mathcal Y$ if and only if
$\pi''[w]=\mathcal N$. In case of $A_k$ being produced at Step \ref{scheduleedge}, both $u$ and $v$ have equal number of neighbors in $A_{k-1}$
choosing $\mathcal Y$ and $\mathcal N$, respectively, under $\pi'$, due to the implementation of the while-loop at Steps \ref{findnode}--\ref{endnode}.
By hypothesis both $u$ and $v$ have equal number of neighbors in $A_{k-1}$ choosing $\mathcal Y$ and $\mathcal N$, respectively, under $\pi''$. It
follows from $uv\in E$ that $\pi'[u]=\pi''[v]=\mathcal Y$ and $\pi'[v]=\pi''[u]=\mathcal N$. In either case, the statement is true for $A_k$, proving
the claim.
\end{proof}

\begin{clm}\label{ind} \begin{mylabel}\item[(i)] At least half nodes of $A$ are scheduled by $\pi$ with $\mathcal Y$  (by Step \ref{takebetter}).
\item[(ii)]The nodes in $V\setminus A$ (if any) form an independent set of $G$ (by Step \ref{findedge}).
\item[(iii)] Each node in $V\setminus A$ has an equal number of neighbors in $A$ choosing $Y $ and $\mathcal N$, respectively, under $\pi'$ (by Steps
 \ref{findnode}--\ref{endnode}), and under $\pi''$ (by Claim \ref{clm3}), and hence  under  $\pi$  (by Step \ref{takebetter}).\qed
\end{mylabel}
\end{clm}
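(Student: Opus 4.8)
The plan is to prove the three parts essentially independently, each by a short argument combining Claim~\ref{clm3} with the two stopping conditions of Algorithm~\ref{al1}. Throughout I would keep in mind that once a node enters $A$ its decision under $\pi'$ (and under $\pi''$) is fixed at the moment it is scheduled, so for any $w\in V\setminus A$ the quantities ``number of neighbors of $w$ in $A$ choosing $\mathcal Y$'' and ``$\ldots$ choosing $\mathcal N$'' are well-defined.

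For part (i) I would use a simple averaging argument. By Claim~\ref{clm3}, for each $v\in A$ exactly one of $\pi'$, $\pi''$ schedules $v$ with $\mathcal Y$ (and the other with $\mathcal N$). Hence, writing $a'$ and $a''$ for the numbers of nodes of $A$ scheduled with $\mathcal Y$ under $\pi'$ and under $\pi''$ respectively, every node of $A$ is counted in exactly one of $a'$, $a''$, so $a'+a''=|A|$ and therefore $\max\{a',a''\}\ge|A|/2$. Since Step~\ref{takebetter} takes $\pi$ to be whichever of $\pi'$, $\pi''$ schedules more nodes with $\mathcal Y$, the schedule $\pi$ assigns $\mathcal Y$ to at least half the nodes of $A$.

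For part (ii) I would argue from the control flow: the algorithm reaches Step~\ref{takebetter} only when both the while-condition at Step~\ref{findnode} and the if-condition at Step~\ref{findedge} fail, since otherwise it would either remain inside Steps~\ref{findnode}--\ref{endnode} or return to Step~\ref{findnode} from Step~\ref{scheduleedge}. In particular, at termination there is no edge $uv\in E$ with $u,v\notin A$, which says exactly that $V\setminus A$ spans no edge of $G$, i.e. it is an independent set. For part (iii), the $\pi'$-statement is precisely the negation of the Step~\ref{findnode} while-condition: since the loop has exited, every $w\in V\setminus A$ has equally many neighbors in $A$ choosing $\mathcal Y$ as choosing $\mathcal N$ under $\pi'$. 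The $\pi''$-statement then follows from Claim~\ref{clm3}, by which each neighbor $v\in A$ of $w$ satisfies $\pi'[v]=\mathcal Y$ iff $\pi''[v]=\mathcal N$; thus the count of $\mathcal Y$-neighbors under $\pi''$ equals the count of $\mathcal N$-neighbors under $\pi'$ and vice versa, and these $\pi'$-counts are equal by the $\pi'$-statement. Finally, as $\pi\in\{\pi',\pi''\}$ by Step~\ref{takebetter}, the equality holds under $\pi$ too.

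Because each argument is brief, the difficulty here is not a single hard step but rather getting the bookkeeping exactly right. The point I expect to need the most care is the control-flow verification underlying part (ii): one must confirm that the algorithm cannot possibly exit to Step~\ref{takebetter} while an edge with both endpoints outside $A$ still exists, and that adding such an edge to $A$ (via Step~\ref{scheduleedge}) never reintroduces a Step~\ref{findnode}-violating node that the loop failed to handle. Once the invariants of the while-loop and the if-branch are pinned down, parts (i) and (iii) reduce to the counting consequences of Claim~\ref{clm3} described above.
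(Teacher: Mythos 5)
Your proof is correct and follows exactly the route the paper intends: the paper states Claim~\ref{ind} with only the parenthetical pointers to Step~\ref{takebetter}, Step~\ref{findedge}, Steps~\ref{findnode}--\ref{endnode} and Claim~\ref{clm3} as justification, and your write-up is a faithful expansion of precisely those pointers (the averaging count $a'+a''=|A|$ from Claim~\ref{clm3} for (i), the failed if-condition at termination for (ii), and the negated while-condition transported through Claim~\ref{clm3} for (iii)). No gaps; nothing further is needed.
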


\begin{theorem}\label{lem}
    {A schedule    that ensures at least $n/2$ decisions of $\mathcal Y$ can be found in $O(n^2)$ time.}
\end{theorem}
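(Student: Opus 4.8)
The plan is to extend the partial schedule $\pi$ returned by Algorithm~\ref{al1} to a full schedule $\hat\pi$ of $G$ and then count its $\mathcal Y$ decisions. Concretely, I would let $\hat\pi$ schedule the nodes of $A$ exactly as $\pi$ does and then schedule the nodes of $V\setminus A$, in any order, after all of $A$. Since a rebel reacts only to the neighbors scheduled before her, placing $V\setminus A$ at the very end leaves every decision on $A$ unchanged; thus $\hat\pi[v]=\pi[v]$ for every $v\in A$, and these values are precisely those described by Claims~\ref{clm3} and~\ref{ind}.

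First I would show that every node of $V\setminus A$ decides $\mathcal Y$ under $\hat\pi$. By Claim~\ref{ind}(ii), $V\setminus A$ is an independent set, so a node $v\in V\setminus A$ has no neighbor in $V\setminus A$; hence, as $v$ is scheduled after all of $A$, every neighbor of $v$ has already been scheduled and lies in $A$. Claim~\ref{ind}(iii) says that $v$ has equally many neighbors in $A$ deciding $\mathcal Y$ and $\mathcal N$ under $\pi$, so $v$ faces a tie and, by the rebel tie-breaking rule, buys $\mathcal Y$. Consequently all $n-|A|$ nodes of $V\setminus A$ decide $\mathcal Y$. Combining this with Claim~\ref{ind}(i), which guarantees that at least $|A|/2$ nodes of $A$ decide $\mathcal Y$, the total number of $\mathcal Y$ decisions under $\hat\pi$ is at least $|A|/2+(n-|A|)=n-|A|/2\ge n/2$, using $|A|\le n$. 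This is the promised guarantee.

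For the running time I would first note termination: each iteration of the while-loop (Step~\ref{schedulenode}) adds one node to $A$ and each execution of Step~\ref{scheduleedge} adds two, so there are at most $n$ such steps. To keep each step cheap, I would maintain, for every $v\in V\setminus A$, (a) the signed difference between the numbers of its $\mathcal Y$- and $\mathcal N$-neighbors in $A$ under $\pi'$, needed for the test in Step~\ref{findnode}, and (b) the number of its neighbors still outside $A$, needed to detect a crossing edge in Step~\ref{findedge}. Whenever a node $w$ enters $A$, both quantities are refreshed for the neighbors of $w$ in $O(d(w))$ time, for a total of $O(\sum_w d(w))=O(m)$ over the whole run, the sum being over the nodes $w$ added to $A$; with these values maintained, each search in Steps~\ref{findnode} and~\ref{findedge} takes $O(n)$, contributing $O(n^2)$ across the at most $n$ steps. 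Since $m\le\binom{n}{2}$, the overall running time is $O(n^2)$.

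The step I expect to be most delicate is establishing that the appended nodes all choose $\mathcal Y$: it requires using all three parts of Claim~\ref{ind} in concert --- independence of $V\setminus A$ so that a late node sees only $A$-neighbors, the exact $\mathcal Y$/$\mathcal N$ balance of those neighbors, and the tie-break convention --- together with the observation that a rebel reacts only to earlier neighbors, so the decisions on $A$ are frozen once $A$ has been scheduled. The $O(n^2)$ bound is similarly sensitive to the incremental bookkeeping: recomputing the balances from scratch at each of the $O(n)$ steps would cost $O(m)$ apiece and inflate the bound to $O(nm)$.
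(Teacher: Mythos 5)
Your proposal is correct and follows essentially the same route as the paper: extend the partial schedule $\pi$ by appending the independent set $V\setminus A$, use Claim~\ref{ind}(ii)--(iii) plus the tie-breaking rule to force all of $V\setminus A$ onto $\mathcal Y$, invoke Claim~\ref{ind}(i) for the count, and maintain the same per-node bookkeeping (the signed $\mathcal Y$/$\mathcal N$ balance and the count of neighbors outside $A$) to get $O(n^2)$. The only differences are cosmetic: you make the count $n-|A|/2\ge n/2$ explicit and charge updates as $O(d(w))$ rather than $O(n)$ per insertion, neither of which changes the argument.
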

\begin{proof} {It follows from Claim \ref{ind}(ii) and (iii) that $\pi$ can be  {extended to a} schedule for $G$ such that all node in $V\setminus A$ choose $\mathcal Y$. By Claim \ref{ind}(i), the outcome   contains at least $n/2$ decisions of $\mathcal Y$.}

Next we show the time complexity. Algorithm \ref{al1} keeps an $n\times 2$ array $[\delta_v,\delta_v']$, $v\in V$, where $\delta_v$ represents the difference between the numbers of neighbors of node $v$ in $A$ choosing $\mathcal Y$ and $\mathcal N$, and $\delta_v'$ represents the number of neighbors of node $v$ in $V\setminus A$. The initial setting of the array $[\delta_v,\delta_v']=[0,$ the degree of $v$ in $G]$, $v\in V$, takes   $O(n^2)$ time.  Step 1 is to find a node $w\notin A$ with $\delta_w\ne0$ by visiting $\delta_v$, $v\in V\setminus A$. Step \ref{findedge} is to find a node $u\in V\setminus A$ with $\delta'_u\ge1$ and then find a neighbor $u\in V\setminus A$ of $v$. The search in both  Steps \ref{findnode} and \ref{findedge} takes   $O(n)$ time. Each time Algorithm \ref{al1} adds a node $v$ to $A$, the algorithm updates the entries  of $v$'s   neighbors   in the array, which   takes   $O(n)$ time. Since we can add at most $n$ nodes to $A$, Algorithm \ref{al1}  terminates in $O(n^2)$ time.
\end{proof}

 The tightness of $n/2$ in the above theorem  can be seen from the case where the network $G$ is a complete graph. Moreover, the theorem implies that Algorithm~\ref{al1} is a 2-approximation algorithm  for the rebel scheduling problem for maximizing $\mathcal Y$ decisions.

 \begin{remark}\label{rm}
 It is worth noting that Algorithm~\ref{al1} can be used to solve the scheduling problem when only one product is promoted, where a consumer buys the product only if  at least a half of her neighbors do not have the product. Given a schedule $\pi$ output by Algorithm \ref{al1}, $\pi$ specifies an order of consumers who choose $\mathcal Y$. All these consumers will buy the product if the seller promotes the product to them according to this order.
\end{remark}

\subsection{When $\mathcal N$ is more profitable}\label{sec:maxn}
 In this subsection, the marketing scheduling is to maximize the number of $\mathcal N $ decisions.  
 By reduction from the {\em bounded occurrence MAX-2SAT} problem  (see Appendix \ref{ap:a}), 
 we obtain the following   NP-hardness.
\begin{theorem}\label{maxN}
The  rebel  scheduling problem for maximizing the number of  $\mathcal N$ decisions is NP-hard.\qed
\end{theorem}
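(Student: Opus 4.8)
The plan is to reduce from \emph{bounded occurrence MAX-2SAT}: we are given variables $x_1,\dots,x_k$ and clauses $C_1,\dots,C_r$, each a disjunction of at most two literals, with every variable occurring in at most a constant number of clauses, and we wish to maximize the number of satisfied clauses. Given such an instance I would build in polynomial time a connected network $G$ together with an integer threshold $\beta$ (a fixed baseline of forced $\mathcal N$ decisions), so that $G$ admits a schedule producing at least $\beta+s$ decisions of $\mathcal N$ if and only if some assignment satisfies at least $s$ clauses. Everything rests on the rebel rule in the form: a node, at its turn, chooses $\mathcal N$ exactly when strictly more of its \emph{already scheduled} neighbors hold $\mathcal Y$ than hold $\mathcal N$ (both ties and $\mathcal N$-majorities yield $\mathcal Y$). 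Two primitives drive the construction: a node given two private pendants scheduled first is \emph{forced} to $\mathcal N$ (it sees $2$ copies of $\mathcal Y$ against $0$), while a pendant scheduled before its unique neighbor is forced to $\mathcal Y$, exactly as exploited in the proof of Theorem~\ref{yhard}.

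For the clause gadget I would introduce, for each clause $C=(\ell_1\vee\ell_2)$, a \emph{clause node} $c_C$ whose neighbors are the two \emph{literal nodes} representing $\ell_1,\ell_2$ together with one private pendant pinned to $\mathcal Y$, all scheduled before $c_C$. Encoding ``true literal $\mapsto\mathcal Y$, false literal $\mapsto\mathcal N$'' at the literal nodes, the rebel rule gives the intended truth table: if both literals are false, $c_C$ sees one $\mathcal Y$ against two $\mathcal N$ and chooses $\mathcal Y$; if at least one literal is true, it sees a strict $\mathcal Y$-majority and chooses $\mathcal N$. Thus $c_C$ contributes an $\mathcal N$ decision precisely when $C$ is satisfied (unit clauses are handled by the same device with a single literal neighbor).

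The crux is the variable gadget. For each $x_i$ I must attach a structure that (a) exposes $p_i$ positive and $q_i$ negative literal nodes, (b) in every \emph{canonical} schedule forces all positive literal nodes of $x_i$ to share one decision and all negative ones to take the complementary decision, so that the literal nodes faithfully realize a single truth value, and (c) contributes a number of internal $\mathcal N$ decisions independent of which truth value is selected. I would realize this by a symmetric two-state switch: a pair of selector nodes joined by an edge, each carrying pendant support and feeding, through short forced chains, the literal copies of the matching sign, so that scheduling one selector before the other fixes and propagates the truth value. To make $\beta$ well defined I pad with dummy occurrences so that $p_i=q_i$ for every $i$; the bounded-occurrence hypothesis guarantees that this padding, and all node degrees, stay bounded, keeping $|V(G)|$ polynomial.

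For the equivalence, the forward direction is routine: an assignment yields a canonical schedule (selectors ordered by truth value, then literal nodes, then clause nodes) whose $\mathcal N$-count is exactly $\beta$ plus the number of satisfied clauses. The reverse direction needs a \emph{normalization lemma}: any schedule can be transformed, without decreasing the number of $\mathcal N$ decisions, into a canonical one in which every variable gadget occupies one of its two consistent states. This is the analogue of Claim~\ref{clm2} in the proof of Theorem~\ref{yhard}, and it is where I expect the real work to lie: I would argue by a local exchange that ``straightens'' an inconsistent gadget, relocating mis-timed selectors and literal nodes while checking, via the rebel rule at each affected clause node, that no $\mathcal N$ decision is lost. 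The main obstacle is precisely engineering the variable gadget so that its two consistent states \emph{dominate} every other local schedule in $\mathcal N$-count and so that inconsistencies can be repaired one gadget at a time without global side effects. Once such a gadget and its exchange argument are established, combining the clause-gadget truth table with the fixed baseline $\beta$ yields the stated equivalence and hence the NP-hardness asserted in Theorem~\ref{maxN}.
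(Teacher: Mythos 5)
Your high-level plan matches the paper's: both reduce from bounded-occurrence MAX-2SAT, encode TRUE literals as $\mathcal Y$ decisions, and arrange that a clause node chooses $\mathcal N$ exactly when its clause is satisfied, with a fixed baseline of forced $\mathcal N$ decisions inside variable gadgets. But there are two genuine gaps. First, your clause gadget is unsound in the direction you need for NP-hardness (schedule $\Rightarrow$ assignment): by giving $c_C$ a private pendant, you hand an adversarial schedule a free exploit --- schedule only that pendant (it chooses $\mathcal Y$) and then $c_C$, which now sees one $\mathcal Y$ against zero $\mathcal N$ and chooses $\mathcal N$ regardless of the literals. Every clause node can be driven to $\mathcal N$ this way, so ``$c_C$ decides $\mathcal N$'' no longer certifies that the clause is satisfiable, and no normalization lemma can repair this, since the exploiting schedule may strictly exceed $\beta+opt(I)$ in $\mathcal N$-count. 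The paper avoids this precisely by giving clause nodes \emph{no} pendants: their only neighbors are literal nodes, so an $\mathcal N$ decision at a clause node forces some literal neighbor to have already chosen $\mathcal Y$, which is exactly what Lemma~\ref{ok} uses.

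Second, the variable gadget --- which you correctly identify as the crux --- is left as an unspecified ``symmetric two-state switch,'' and the normalization lemma that would tame arbitrary optimal schedules is only asserted. This is where essentially all the difficulty of the reduction lives, and the paper's solution is quite different from a local-exchange argument: each pair $x_i,\neg x_i$ is wired to a $9$-cycle $C_i$ whose every node carries $L=10N+M$ private pendants, so that in any optimal schedule all of $C_i$ must choose $\mathcal Y$ (otherwise $L$ pendants are lost, Claim~\ref{sy}); the last-scheduled node of the cycle then forces at most one of $x_i,\neg x_i$ to choose $\mathcal Y$ (Claim~\ref{ce}), and the counting in Lemma~\ref{ok} absorbs the remaining case where both choose $\mathcal N$. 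This ``pendant amplification'' yields a global constraint on \emph{every} optimal schedule directly, with no need to reorder or repair schedules gadget by gadget. As written, your proposal identifies the right target but does not yet contain a working construction or proof for either direction.
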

Next, we  {design a  $3$-approximation algorithm for finding in $O(n^2)$ time a schedule which ensures at least $n/3$ decisions of $\mathcal N$.}
This is accomplished by  a {refinement} 
 of Algorithm \ref{al1} with some preprocessing.

 {The following terminologies will be used in our discussion. Given a graph $H$ with node set $U$, let $R,S\subseteq U$ be two node subsets. We say that $R$ {\em dominates}  $S$ if every node in $S$ has at least a neighbor in $R$. We use $H\setminus R$ to denote the graph obtained from $H$ by deleting all nodes in $R$ (as well as their incident {links}). Thus $H\setminus R$ is the subgraph of $H$ induced by $U\setminus R$, which we also denote as $H[U\setminus R]$.}

\paragraph{Preprocessing.}
{Given a connected social network $G=(V,E)$,} 
let $X$ be any maximal independent set {of $G$.  It is clear that }

\begin{itemize}
\vspace{-2mm}\item {$X$ and $Y:=V\setminus X$ are disjoint node sets dominating each other.}\vspace{-2mm}
\end{itemize}
We will partition {$X$ into $X_1,\ldots,X_{\ell}$ and $Y$ into $Y_0,Y_1,\ldots,Y_{\ell}$} 
for some positive integer $\ell$ such that  Algorithm \ref{al1} schedules $X_i\cup Y_i$ before $X_{i-1}\cup Y_{i-1}$ for all $i=\ell,\ell-1,\ldots,2$.
\begin{itemize}
\vspace{-2mm}\item Set  {$G_0=G$ and}
$X_0=\emptyset$. {Find} $Y_0\subseteq Y$ such that  $Y\setminus Y_0$ is a {\em minimal} set 
 that dominates $X\setminus X_0$ {($=X$)  in graph $G_0$}.
 \item Set graph $G_1:=G \setminus (X_0\cup Y_0)=G[(X\setminus X_0)\cup(Y\setminus Y_0)]$.\vspace{-2mm}
 \end{itemize}
 The minimality of  $Y\setminus Y_0$  implies that in 
 {graph} $ G_1$ every node in $Y\setminus Y_0$  is {adjacent  to  at least one pendant node} 
 in $X\setminus X_0$.
 \begin{itemize}
\vspace{-2mm}\item  Let $X_1\subseteq X\setminus X_0$ consist  of all {pendant} nodes of $ G_1$  
contained in {$X\setminus X_0$}.\vspace{-2mm}
 \end{itemize}
 If $X\setminus (X_0\cup X_1)\ne\emptyset$, {then} $Y\setminus Y_0$ still dominates $X\setminus (X_0\cup X_1)$, {and} we  repeat the above process {with $G_1$, $X\setminus X_0$, $Y\setminus Y_0$ in place of $G_0$, $X$, $Y$, respectively,} and produce $Y_1$, $G_2$, $X_2$ in place of $Y_0$,  $G_1$, $X_1$.

 Inductively, for $i=1,2,\ldots$, {given graph $G_i=G\setminus\cup_{j=0}^{i-1}(X_j\cup Y_j)=G[(X\setminus\cup_{j={0}}^{i-1} X_j)\cup(Y\setminus\cup_{j={0}}^{i-1} Y_j)]$, where {$Y\setminus\cup_{j=0}^{i-1} Y_j$} is a minimal set  dominating $X\setminus\cup_{j=1}^{i-1} X_j$, and $X_i$ the set of all pendant nodes of $G_i$ contained in $X\setminus\cup_{j={0}}^{i-1} X_j$,  when $X\setminus\cup_{j={0}}^{i} X_j\ne\emptyset$, we can}


  \begin{itemize}
\vspace{-2mm}\item
 {Find} $Y_i\subseteq (Y\setminus \cup_{j=0}^{i-1}Y_j)$ such that $Y\setminus\cup_{j=0}^{i}Y_j$ is a {\em minimal} set 
 that dominates $X\setminus \cup_{j=0}^{i}X_j$ {in graph $G_i$}.

\item
{Set graph $G_{i+1}:=G\setminus\cup_{j=1}^i(X_j\cup Y_j)=G[(X\setminus\cup_{j=0}^{ i}X_j)\cup(Y\setminus\cup_{j=0}^{ i}Y_j)]$.}

\item
Let $X_{i+1}\subseteq X\setminus\cup_{j=0}^{i}X_j$ consist of all {pendant} 
nodes of $ G_{i+1}$  that are contained in {$X\setminus\cup_{j=0}^{i}X_j$}. \vspace{-3mm}
 \end{itemize}
{The procedure terminates at $i=\ell$ for which we have $X\setminus\cup_{j=0}^\ell X_j=\emptyset$, and} 
\begin{center}
 $G_i=G[(\cup_{j=i}^{\ell}Y_j)\bigcup(\cup_{j=i}^{\ell}X_j)]$ for $i=0,1,\ldots,\ell$; in particular $G_0=G$.
  \end{center}
 Note that $G_i\subseteq G_{i-1}$ for $i=\ell,\ell-1,\ldots,1$,
 $Y\setminus Y_0$ is the disjoint union of $Y_1,\ldots, Y_{\ell}$, and $X$ is  the disjoint union of $X_1,\ldots, X_{\ell}$.
 The minimality of $\cup_{j=i}^{\ell}Y_j=Y\setminus\cup_{j=0}^{i-1}Y_j$ implies that in graph $ G_i$ every node in $\cup_{j=i}^{\ell}Y_j$ is {adjacent to} at least one pendant  node   in $X_i$.

\begin{clm}\label{clm7}
 For any   $i=\ell,\ell-1,\ldots,1$, in the subgraph $G_i$, all nodes  in $X_{i}$ are pendant, and every node in $Y_i$ is adjacent to at least one node in $X_i$.
\end{clm}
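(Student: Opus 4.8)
The plan is to establish the two assertions separately; both will follow by unwinding the definitions from the preprocessing together with the single structural fact that $X$ is independent. Throughout I abbreviate $X':=X\setminus\cup_{j=0}^{i-1}X_j$ and $Y':=Y\setminus\cup_{j=0}^{i-1}Y_j=\cup_{j=i}^{\ell}Y_j$, so that $G_i=G[X'\cup Y']$. The first assertion requires essentially no work: by construction $X_i$ is \emph{defined} to be the set of all pendant nodes of $G_i$ lying in $X'$, so every node of $X_i$ is pendant in $G_i$ by definition.

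For the second assertion I would exploit the minimality that the preprocessing guarantees. The step producing $Y_{i-1}$ ensures that $Y'$ is an (inclusion-)minimal set dominating $X'$ in $G_{i-1}$. The first step is to translate this minimality into the standard private-neighbor statement: for every $y\in Y'$, the set $Y'\setminus\{y\}$ no longer dominates $X'$, so there is a node $x_y\in X'$ whose only neighbor in $Y'$ is $y$. The key step is then to upgrade this private neighbor into a pendant node of $G_i$ lying in $X_i$: since $x_y\in X'\subseteq X$ and $X$ is independent, $x_y$ has no neighbor inside $X'$, and as its unique neighbor in $Y'$ is $y$, the node $x_y$ has degree exactly one in $G_i=G[X'\cup Y']$. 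Hence $x_y$ is pendant in $G_i$ and lies in $X'$, so $x_y\in X_i$ by definition. Thus every $y\in Y'$ is adjacent in $G_i$ to some node of $X_i$, and since $Y_i\subseteq Y'$, in particular each node of $Y_i$ is adjacent to a node of $X_i$.

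The only point needing care — the main (if minor) obstacle — is that the minimality of $Y'$ is recorded relative to $G_{i-1}$, whereas the claim speaks about adjacencies in $G_i$. I would dispose of this by observing that $G_i$ is precisely the subgraph of $G_{i-1}$ induced on $X'\cup Y'$; because $X$ is independent, the only edges relevant to the domination of $X'$ by $Y'$ are the $X'$--$Y'$ edges, and these are identical in $G_{i-1}$ and $G_i$. Consequently the private-neighbor argument above may be carried out verbatim inside $G_i$, which is exactly what the claim requires. (This also re-derives, en route, the stronger statement asserted just before the claim, namely that every node of $\cup_{j=i}^{\ell}Y_j$ is adjacent in $G_i$ to a \emph{pendant} node of $X_i$.)
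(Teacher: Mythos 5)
Your proposal is correct and follows essentially the same route as the paper, which justifies the claim by the one-line observation that the minimality of $\cup_{j=i}^{\ell}Y_j$ as a dominating set forces each of its members to be adjacent in $G_i$ to a pendant node of $X_i$ (the first assertion being true by the very definition of $X_i$). Your private-neighbor argument, together with the remark that the independence of $X$ makes the relevant $X'$--$Y'$ adjacencies identical in $G_{i-1}$ and $G_i$, is exactly the detail the paper leaves implicit.
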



\paragraph{Refinement.} Next we show that Algorithm \ref{al1}  can be implemented {in a way that} 
all nodes of  subgraph $G_1$  are scheduled. If {the implementation has led to at least $n/3$ decisions of $\mathcal N$,} 
we are done; otherwise, {due to the maximality of the independent set $X$}, we can easily find another schedule that  makes at least $n/3$ nodes choose $\mathcal N$.

\begin{algorithm}\label{al2}
{\em Input}: {Network  $G=(V,E)$ together
with $G_j,X_j,Y_j$, $j=0,1,\ldots,\ell$.} 
{\em Output}: {Partial schedule $\pi$ for $G$.} 
\end{algorithm}
\vspace{-2mm}\hrule
\begin{enumerate}
\vspace{-2mm}\item {Initial setting: $A\leftarrow\emptyset$}

\vspace{-2mm}\item {\bf For} $i=\ell$ {\bf downto} 0 {\bf do}

\vspace{-2mm}  \item  \hspace{1mm}  \textbf{While} in the subgraph $G_i$, $\exists$ $w\in (X_i\cup Y_i)\setminus A$ which has different numbers

\vspace{-1mm}\hspace{1mm}  of neighbors  in $A$ choosing $\mathcal Y$ and $\mathcal N$ respectively \textbf{do} \label{stp2}

\vspace{-2mm}\item \hspace{4mm} {\sc schedule $w$}; $A\leftarrow A\cup\{w\}$\label{schedulew}

\vspace{-2mm}\item \hspace{1mm} {\bf End-while}

 \vspace{-2mm} \item \hspace{1mm} {\bf If} {$\exists$ edge $uv$ of $G_i$ with $u,v\notin A$} 

\vspace{-2mm}\item  \hspace{4mm} {\bf then} {\sc schedule $uv$};  $A\leftarrow A\cup \{u,v\}$;     Go back to Step \ref{stp2}.\label{stp3}

\vspace{-2mm} \item {\bf End-for}

\vspace{-2mm} \item Let $\pi$ be a schedule for $G[A]$ that schedules at least $\frac12|A|$ nodes with $\mathcal N$\label{output}
  \vspace{2mm}
  \hrule
   \end{enumerate}

The validity of Step \ref{output} is guaranteed by Claim \ref{clm3}. Since $X\cap Y_0=\emptyset$, the following claim implies $X\subseteq A$.

\begin{clm}\label{contain}
 $V\setminus A\subseteq Y_0$. 
\end{clm}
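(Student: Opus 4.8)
The plan is to prove the equivalent statement that every node outside $Y_0$ is added to $A$, i.e. $V(G_1)=X\cup(Y\setminus Y_0)=\bigcup_{j=1}^{\ell}(X_j\cup Y_j)\subseteq A$; since $V$ is the disjoint union of $Y_0$ and $V(G_1)$, this immediately yields $V\setminus A\subseteq Y_0$. I would establish it by reverse induction on the for-loop index, showing that for each $i=\ell,\ell-1,\ldots,1$, once Algorithm~\ref{al2} finishes processing $G_i$ we have $\bigcup_{j=i}^{\ell}(X_j\cup Y_j)\subseteq A$. The induction hypothesis guarantees that when the loop begins with index $i$, all higher layers $\bigcup_{j>i}(X_j\cup Y_j)$ already lie in $A$ (for $i=\ell$ this is vacuous); and since the lower layers $\bigcup_{j<i}(X_j\cup Y_j)$ are not present in $G_i$, the only unscheduled nodes of $G_i$ at any moment of iteration $i$ are exactly those of $(X_i\cup Y_i)\setminus A$. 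Consequently the while-loop (Step~\ref{stp2}) and the edge step (Step~\ref{stp3}) only ever add nodes of $X_i\cup Y_i$ to $A$.

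The heart of the argument is to show that iteration $i$ leaves no node of $X_i\cup Y_i$ unscheduled. Let $R:=(X_i\cup Y_i)\setminus A$ be the set of such nodes at the moment iteration $i$ terminates. Termination means that the while-loop found no forced node and the edge step found no edge, which translates into two facts: (a) $R$ is an independent set of $G_i$ (no edge of $G_i$ has both ends outside $A$, and the nodes of $G_i$ outside $A$ are precisely $R$); and (b) every node of $R$ has equal numbers of $\mathcal Y$- and $\mathcal N$-neighbors in $A$ within $G_i$. I would first rule out $R\cap X_i$: by Claim~\ref{clm7} each $x\in X_i$ is pendant in $G_i$ with a unique neighbor $y^{*}\in V(G_i)$; independence~(a) forces $y^{*}\notin R$, and since the unscheduled nodes of $G_i$ are exactly $R$ we get $y^{*}\in A$. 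Thus $x$ has exactly one neighbor in $A$ and hence unequal $\mathcal Y/\mathcal N$ counts, contradicting~(b). Therefore $R\cap X_i=\emptyset$.

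It then remains to show $R\cap Y_i=\emptyset$, which I expect to be the main obstacle since it must invoke the scheduling mechanism rather than a mere counting argument. Given $y\in R\cap Y_i$, Claim~\ref{clm7} supplies a neighbor $x\in X_i$ that is pendant in $G_i$, so $y$ is the \emph{unique} $G_i$-neighbor of $x$. Because $R\cap X_i=\emptyset$ we have $x\in A$, and I would track how $x$ could have entered $A$. If $x$ was added by the while-loop (Step~\ref{stp2}), then at that instant $x$ had a neighbor in $A$ with unequal $\mathcal Y/\mathcal N$ counts; but its only $G_i$-neighbor is $y$, forcing $y\in A$ already. If instead $x$ was added by the edge step (Step~\ref{stp3}), the scheduled edge incident to $x$ must be $xy$ (its only $G_i$-edge), so $y$ entered $A$ simultaneously with $x$. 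Either way $x\in A$ implies $y\in A$, contradicting $y\in R$. Hence $R\cap Y_i=\emptyset$, so $R=\emptyset$ and all of $X_i\cup Y_i$ is scheduled, completing the induction and establishing $V(G_1)\subseteq A$, i.e. $V\setminus A\subseteq Y_0$.
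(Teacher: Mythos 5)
Your proof is correct and follows essentially the same route as the paper's: both rely on Claim \ref{clm7} together with the termination conditions of the while-loop and the edge step to show that every node of $X_i\cup Y_i$ enters $A$ during iteration $i$, handling $X_i$ via the pendant-neighbor counting argument and $Y_i$ via its pendant neighbor in $X_i$. The only difference is presentational — you phrase it as a contradiction at termination and spell out explicitly why the pendant neighbor $x\in A$ forces $y\in A$, a step the paper leaves implicit in the assertion that ``node $v$ must have not been selected to $A$.''
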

\begin{proof}
We only need to show  {that each node $w\in X_k\cup Y_k$ $(k=1, 2,\cdots, \ell)$ is selected to $A$ when $i=k$ in Algorithm \ref{al2}.}

 In case of  $w\in X_k$, it is pendant and has only one neighbor $u$ in subgraph $G_k$. If $u\in A$ when $w$ is checked at Step \ref{stp2}, then $w$ is selected to $A$ at Step \ref{schedulew}; Otherwise, $w$ and $u$ will be selected to $A$ at the same time in Step \ref{stp3}.

 {In case of} 
 $w\in Y_k$, {by Claim \ref{clm7},  $w$ is adjacent to a  pendant node $v\in X_k$ of $G_k$.} If, {when   checked {at Step \ref{stp2}},}  $w$ has different numbers of neighbors in $A$ choosing $\mathcal Y$ and $\mathcal N$, then it is selected to $A$ {at Step \ref{schedulew}}; 
 otherwise,  node $v$ must have not been selected to $A$, {and subsequently $w$ and $v$ are put into $A$ together at Step~\ref{stp3}.}
\end{proof}
If $|A|> 2n/3$, then,  {by extending partial schedule $\pi$ output by Algorithm \ref{al2}, we  obtain} 
a schedule which makes at least $n/3$ nodes 
choose $\mathcal N$. Otherwise, {$|V\setminus A|\ge n/3$, and all nodes in $V\setminus A$ can be scheduled with $\mathcal N$ as follows: Schedule firstly the nodes in the maximal independent set $X$ {(all of them choose $\mathcal Y$)}; secondly the nodes in $V\setminus A$, and}  
finally all the other nodes. Recall that $X$ dominates every node in {$Y\supseteq Y_0$. It follows from Claim \ref{contain} that} $X$ dominates
$V\setminus A$. {As $V\setminus A$ is an independent set in $G$ (by Claim \ref{ind}(ii)), the decisions of all nodes in $V\setminus A$ are $\mathcal
N$.}  We show in Appendix \ref{ap:b} that Algorithm \ref{al2} runs in square time, which implies the following.

\begin{theorem}
{A schedule    that ensures at least $n/3$ decisions of $\mathcal N$ can be found in $O(n^2)$ time.}\label{th4}\qed
\end{theorem}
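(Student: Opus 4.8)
The plan is to separate the two assertions of the theorem—that the returned schedule yields at least $n/3$ decisions of $\mathcal N$, and that it is produced in $O(n^2)$ time—treating the correctness assertion as essentially secured by the discussion preceding the statement and concentrating the real effort on the time bound.

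For correctness I would record the two-case argument already outlined before the theorem. Running Algorithm~\ref{al2} yields the set $A$ and a partial schedule $\pi$ for $G[A]$ which, by Claim~\ref{clm3} via Step~\ref{output}, assigns $\mathcal N$ to at least $\tfrac12|A|$ nodes of $A$. If $|A|>2n/3$ then $\tfrac12|A|>n/3$, and appending the nodes of $V\setminus A$ in any order leaves the decisions on $A$ unchanged (a node's decision depends only on already-scheduled neighbors), so the full schedule already has more than $n/3$ decisions of $\mathcal N$. If instead $|A|\le 2n/3$, then $|V\setminus A|\ge n/3$; here I schedule $X$ first (all of $X$ chooses $\mathcal Y$), then $V\setminus A$, then the remaining nodes. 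By Claim~\ref{contain}, $V\setminus A\subseteq Y_0\subseteq Y$, so $X$ dominates $V\setminus A$, and by Claim~\ref{ind}(ii) the set $V\setminus A$ is independent; hence when a node of $V\setminus A$ is scheduled its only already-scheduled neighbors lie in $X$ and all chose $\mathcal Y$, forcing it to choose $\mathcal N$. Either way at least $n/3$ nodes choose $\mathcal N$, and since the optimum is at most $n$, this is a $3$-approximation.

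For the running time I would account for four phases: (a) computing a maximal independent set $X$; (b) the preprocessing that builds the partition $\{X_i,Y_i\}$; (c) executing Algorithm~\ref{al2}; and (d) extending $\pi$ to a full schedule. Phase (a) is a single greedy sweep in $O(n+m)=O(n^2)$, using $n=O(m)$ and $m=O(n^2)$. Phase (c) inherits the analysis of Theorem~\ref{lem}: maintaining the same per-node array $[\delta_v,\delta_v']$, each node enters $A$ at most once, each insertion triggers $O(n)$ neighbor updates, and the search in the while/if tests costs $O(n)$; the outer for-loop over the $\ell+1$ levels merely partitions the insertions, so the total stays $O(n^2)$. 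Phase (d) is a constant number of linear scans to order $X$, then $V\setminus A$, then the rest, hence $O(n^2)$.

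The crux, and the step I expect to be the main obstacle, is bounding phase (b) by $O(n^2)$. A naive implementation that recomputes a minimal dominating set and re-extracts the pendants on each subgraph $G_\ell\subseteq\cdots\subseteq G_0=G$ from scratch costs $O(|E(G_i)|)$ per level, hence $O(m\ell)=O(n^3)$ in the worst case, since $\ell$ can be $\Theta(n)$. To avoid this I would realize the entire construction as a single peeling process driven by incremental counters: for each $x\in X$ store the number of its surviving $Y$-neighbors, so that $x$ becomes pendant (joining the current $X_i$) exactly when this counter drops to $1$, at which point its unique neighbor is locked as essential; a $Y$-node is deleted into $Y_i$ once all of its surviving $X$-neighbors are multiply dominated, and its deletion decrements the counters of those neighbors, possibly triggering further peeling. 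Under this scheme every edge is examined a constant number of times over the whole run and every node is finalized to its level once, giving total work $O(m)=O(n^2)$. The delicate point is to verify that this peeling, with its alternation between deleting redundant $Y$-nodes and exposing new pendants, reproduces exactly the minimal dominating sets $Y_i$ and pendant sets $X_i$ prescribed by the preprocessing, so that Claim~\ref{clm7} (and therefore Claim~\ref{contain}) continues to hold.
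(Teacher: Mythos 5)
Your proposal is correct and follows essentially the same route as the paper: the same two-case correctness argument based on Claims~\ref{clm3}, \ref{ind}(ii) and \ref{contain}, and the same resolution of the preprocessing bottleneck via a single incremental peeling with per-node counters of surviving neighbors (the paper's ``non-critical $Y$-nodes'' and pendant $X$-node sets in Appendix~\ref{ap:b}), yielding $O(m)$ for the partition and $O(n^2)$ overall. The ``delicate point'' you flag is exactly what the paper's Appendix~\ref{ap:b} verifies, so nothing essential is missing.
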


{The tightness of $n/3$ 
can be seen from a number of disjoint triangles linked by a path, where each triangle has exactly two nodes of degree two.}
\section{Regret-proof schedules}\label{sec:regretfree}
 We are to find  regret-proof schedules, where every consumer, given the choices of other consumers in the outcome of the schedule, would prefer the product she bought to the other. Using  link cuts as a tool, we
design algorithms for finding   regret-proof schedules that ensure  at least $n/2$ decisions of $\mathcal Y$ and at least $\sqrt{n+1}-1$ decisions of $\mathcal N$, respectively. 

\subsection{Stable cuts}
 Given $G=(V,E)$, let $R$ and
$S$ be two disjoint subsets of $V$. We use $[R,S]$ to denote the set of links (in $E$) with one end in $R$ and the other in $S$. If $R\cup S=V$, we
call $[R,S]$  a  {\em link  cut} or simply a {\em cut}.  For a node $v\in V$, we use $d_S(v)$ to denote the number of neighbors of $v$
contained in $S$. {\em Each schedule $\pi$ for $G$ is  associated with a cut $[S_1,S_2]$ of $G$} defined by its outcome: $S_1$ (resp. $S_2$) is the
set of consumers scheduled with $\mathcal Y$ (resp. $\mathcal N$).
A schedule $\pi$ is {\em regret-proof} if and only if its associated cut $[S_1,S_2]$ is
{\em stable}, i.e., satisfies the following conditions:
\begin{equation} d_{S_2}(v)\ge d_{S_1}(v)\text{ for any }v\in S_1, \text{ and }d_{S_1}(v)>d_{S_2}(v)\text{ for any }v\in S_2.\label{stable}\end{equation}
Note that $S_1$ and $S_2$ are asymmetric. For clarity, we call $S_1$ the {\em leading set} of cut $[S_1,S_2]$. Any node that violates (\ref{stable}) is called {\em violating} (w.r.t. $[S_1,S_2]$). 

A basic operation in our algorithms is ``enlarging'' unstable cuts by moving {``violating''} nodes from one side to the other. Let $[S_1,S_2]$ be an unstable cut of $G$ for which some $v\in S_i$ ($i=1$ or 2) is violating. We define {\em type}-$i$ {\em move} of $v$ (from $S_i$ to $S_{3-i}$) to be the setting:  $S_i\leftarrow S_i\setminus\{v\}$,  $S_{3-i}\leftarrow S_{3-i}\cup\{v\}$, which changes the cut. The violation of (\ref{stable}) implies
\begin{mylabel2}
\vspace{-1mm}\item[(M1)] {type}-$1$ {move} increases the cut size, and downsizes the leading set;
\item[(M2)] {type}-$2$ {move} does not decrease the cut size, and enlarges the leading set.\vspace{-1mm}
\end{mylabel2}

Both types of moves are collectively called {\em moves}.
{Note that   moves  are only defined for violating nodes, and the cut size $|[S_1,S_2]|$ is nondecreasing under moves.} To find a stable cut, our
algorithms work with a cut $[S_1,S_2]$ of $G$ and change it by moves sequentially.  By (M1) and (M2), the number $m_1$ of {type}-$1$ moves is $O(m)$. {Moreover, we have the following observation.}

\begin{lemma}\label{movement}\begin{mylabel}\item[(i)]
From any given cut {of size $s$}, $O(m_1+n)$ moves produce a stable cut (i.e., a cut without violating nodes) of  {of size at least $s+m_1$}.
\item[(ii)]If the leading set of the stable cut produced is smaller than that of the given cut, then the number of type-2 moves is smaller than that of type-1 moves.\qed
\end{mylabel}
\end{lemma}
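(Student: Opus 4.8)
The plan is to follow two integer quantities along the sequence of moves -- the cut size $|[S_1,S_2]|$ and the size of the leading set $|S_1|$ -- and to extract both claims from how each move type perturbs them. I would fix the given cut $[S_1,S_2]$, write $s$ for its size and $a_0$ for the initial value of $|S_1|$, and suppose that some run of moves performs $m_1$ type-$1$ and $m_2$ type-$2$ moves, arriving at a cut whose leading set has size $a$. The whole argument is choice-independent: it bounds \emph{any} maximal sequence of moves, which is exactly what is needed since the algorithms may pick violating nodes arbitrarily.

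The key step is an elementary bookkeeping identity for the leading set. By (M1) each type-$1$ move deletes exactly one node from $S_1$, and by (M2) each type-$2$ move adds exactly one node to $S_1$; hence at \emph{every} stage $a = a_0 - m_1 + m_2$. Since $0\le a\le n$ throughout, this yields $m_2 - m_1 = a - a_0 \le n$, so $m_2\le m_1+n$ and the total number of moves is $m_1+m_2\le 2m_1+n = O(m_1+n)$. For the size, I would use the complementary monotonicity: by (M1) each type-$1$ move raises the cut size by at least $1$, while by (M2) no type-$2$ move lowers it, so the cut size never decreases and the final size is at least $s+m_1$. Because the cut size is bounded above by $m$, the same inequality gives $m_1\le m-s\le m$, so only finitely many ($O(m_1+n)$) moves are possible; when none remain, no node is violating, i.e. the cut is stable. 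This establishes (i), producing a stable cut of size at least $s+m_1$ in $O(m_1+n)$ moves.

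Part (ii) should then fall straight out of the identity $a=a_0-m_1+m_2$: if the leading set of the produced stable cut is strictly smaller than that of the given cut, i.e. $a<a_0$, then $m_2-m_1=a-a_0<0$, whence $m_2<m_1$, as claimed.

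The step I expect to be the genuine obstacle is controlling the number of type-$2$ moves $m_2$. Type-$2$ moves are ``free'' in that (M2) only guarantees they do not decrease the cut size, so the monotone cut-size potential that disciplines type-$1$ moves gives no direct grip on them, and one might worry about long (even cyclic) runs of cut-preserving type-$2$ moves. The resolution is precisely the leading-set identity: because $|S_1|$ is confined to $[0,n]$ and each move changes it by exactly $\pm1$ in a direction determined by the move type, the excess of type-$2$ over type-$1$ moves can never exceed $n$. The one point I would take care to phrase correctly is that $a=a_0-m_1+m_2$ holds at every intermediate stage, not merely at termination, so that the bound $m_2\le m_1+n$ \emph{certifies} termination rather than presupposing it.
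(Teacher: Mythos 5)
Your proof is correct and follows exactly the route the paper intends: the lemma is stated there as an immediate consequence of (M1) and (M2), and your bookkeeping identity $|S_1|=a_0-m_1+m_2$ together with the monotone cut-size potential is precisely the verification being left implicit. Your explicit remark that the identity holds at every intermediate stage (so that $m_2\le m_1+n$ certifies termination rather than assuming it) is a worthwhile clarification, but it is not a different approach.
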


 As a byproduct of (M1) and (M2), one can easily  deduce  that the rebel game on a network,  where each rebel
 switches  between two choices in favor of the minority choice of her neighbors, is a potential game and thus possesses a Nash
equilibrium. The potential function is defined as the size of the cut between the rebels holding different choices.


The following data structure is employed
for efficiently identifying  violations as well as verifying the stability of the cut. For given
cut $[S_1,S_2]$,
we create
in $O(m)$ time a $2$-dimensional array $(i(v), \Delta(v))$,  $v\in V$, of length $n$, where
$i(v)\in\{1,2\}$ is the set index satisfying $S_{i(v)}\ni v$,
and $\Delta(v)=d_{S_{3-i(v)}}(v)-d_{S_{i(v)}}(v)$ together with $i(v)$ is the indicator of
whether $v$ is violating. A node $v$ is violating if and only if $\Delta(v)<0$ when $i(v)=1$ or
$\Delta(v)\le 0$ when $i(v)=2$. Therefore, in $O(n)$ time we can find a violating node $v$ (if
any) and move it. After the move, we update the array (to be the one for the current cut) in $O(n)$ time by modifying the entries corresponding to $v$ and its neighbors. Without consideration of the $O(m)$ time creation of the array, we have the following lemma.

\begin{lemma}\label{timen}
In $O(n)$ time, either the current cut is verified to be stable, or a move is found and \mbox{conducted.\!\!\!\qed}
\end{lemma}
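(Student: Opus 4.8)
The plan is to argue that, assuming the array $(i(v),\Delta(v))$, $v\in V$, is already at hand and correctly encodes the current cut $[S_1,S_2]$, a single linear pass suffices both to decide stability and, failing that, to locate and perform one move together with the required array update. First I would scan the array once, testing for each $v$ its stored pair against the violating criterion recalled just before the lemma, namely $\Delta(v)<0$ when $i(v)=1$, or $\Delta(v)\le 0$ when $i(v)=2$. This scan visits $n$ entries and spends $O(1)$ per entry, so it runs in $O(n)$ time. If no entry meets the criterion, then by the stated characterisation of violating nodes the cut satisfies (\ref{stable}) and is thereby verified to be stable; otherwise the scan returns a violating node $v$, say with $v\in S_i$.

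Second, having found such a $v$, I would conduct its type-$i$ move and then patch the array in place. The key structural observation is that moving $v$ from $S_i$ to $S_{3-i}$ changes the set membership only of $v$ and alters the neighbour counts $d_{S_1}(\cdot),d_{S_2}(\cdot)$ only for $v$ and its neighbours; every other array entry is untouched. For $v$ itself, $i(v)$ flips from $i$ to $3-i$, and since $v$ is not its own neighbour its two neighbour counts are unchanged, so $\Delta(v)$ simply reverses sign. For each neighbour $u$ of $v$, the move shifts exactly one of $u$'s neighbours, namely $v$, across the cut, so $d_{S_i}(u)$ drops by one and $d_{S_{3-i}}(u)$ rises by one; consequently $\Delta(u)$ changes by $+2$ if $u\in S_i$ and by $-2$ if $u\in S_{3-i}$. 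Each such adjustment is computable in $O(1)$ from the stored entry, and there are at most $\deg(v)\le n-1$ neighbours, so the whole update costs $O(n)$. Combining the scan with the update yields the claimed $O(n)$ bound.

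The step I expect to require the most care is the bookkeeping in the second part: establishing that no array entry outside $\{v\}\cup N(v)$ can become stale after the move, and that the sign-flip for $v$ together with the $\pm 2$ adjustments for its neighbours are exactly the corrections needed to restore consistency with the definition $\Delta(\cdot)=d_{S_{3-i(\cdot)}}(\cdot)-d_{S_{i(\cdot)}}(\cdot)$ after the single reclassification of $v$. Once this consistency is verified the time bound is immediate. Note that the $O(m)$ cost of building the array initially is deliberately excluded from the statement, so it plays no role in the argument here.
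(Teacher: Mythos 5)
Your proposal is correct and follows exactly the paper's approach: the paper likewise maintains the array $(i(v),\Delta(v))$, scans it in $O(n)$ time to find a violating node or certify stability via the stated criterion, and after a move updates only the entries of $v$ and its neighbours in $O(n)$ time. Your explicit verification of the sign-flip for $\Delta(v)$ and the $\pm 2$ corrections for neighbours is a more detailed account of the same bookkeeping the paper leaves implicit.
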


The following procedure, as a subroutine of our algorithm, finds a  stable cut {whose leading set contains at least half nodes of $G$}.

\begin{procedure}\label{proc:2}
{\em Input}: Cut $[S_1,S_2]$ of $G$. {\em Output}:
Stable cut $[T_1,T_2]:=\text{\sc Prc\ref{proc:2}}(S_1,S_2)$
\end{procedure}
\vspace{0mm}\hrule
\begin{enumerate}
\vspace{-1mm}\item \textbf{Repeat}
\vspace{-2mm}\item\hspace{1mm} {\bf If} $|S_1|<n/2$ {\bf then}   $[S_1,S_2]\leftarrow [S_2,S_1]$  \hfill{\small // swap $S_1$ and $S_2$}\label{swap}

\vspace{-2mm}
\item\hspace{1mm} \textbf{While} $\exists$ violating node  $v  $ w.r.t. $[S_1,S_2]$   \textbf{do} move $v$\label{findviolation} \hfill{\small // $[S_1,S_2]$ is changing}\label{while}

\vspace{-2mm}
\item \textbf{Until} $|S_1|\ge n/2$ \label{until}

\vspace{-2mm}\item  Return $[T_1,T_2]\leftarrow [S_1,S_2]$

\vspace{1mm}
  \hrule
   \end{enumerate}

\begin{lemma}\label{lem2}
{Procedure  \ref{proc:2}  produces in $O(tn+n^2)$  time  a stable cut $[T_1,T_2]$ of $G$ such that $|T_1|\ge n/2$, where $t=|[T_1,T_2]| - |[S_1,S_2]|\ge0$.}
\end{lemma}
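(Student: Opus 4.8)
The plan is to track two monotone quantities through the run of Procedure \ref{proc:2}: the cut size $|[S_1,S_2]|$, which by (M1) and (M2) never decreases under a move and is unchanged by the swap of Step \ref{swap}; and the leading-set size $|S_1|$. Correctness is then immediate. The \textbf{while}-loop of Step \ref{while} terminates only when no violating node remains, so the returned $[T_1,T_2]$ is stable; and the \textbf{until}-test of Step \ref{until} guarantees $|T_1|\ge n/2$. Because the cut size is nondecreasing, $t=|[T_1,T_2]|-|[S_1,S_2]|\ge 0$, as asserted.

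The core of the proof is a count of moves. I would split the execution into \emph{phases}, one per iteration of the \textbf{repeat}-loop; phase $k$ is the (optional) swap of Step \ref{swap} followed by one run of the \textbf{while}-loop to stability. Write $a_k,b_k$ for the numbers of type-$1$ and type-$2$ moves in phase $k$, and $L_k^{\mathrm{in}},L_k^{\mathrm{out}}$ for $|S_1|$ at its start (after the swap, if any) and its end. Each phase but the last begins with $L_k^{\mathrm{in}}\ge n/2$ — either $|S_1|\ge n/2$ already in the first phase, or the swap was applied because its predecessor failed the \textbf{until}-test — and, failing its own \textbf{until}-test, ends with $L_k^{\mathrm{out}}<n/2$. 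Thus every non-final phase strictly shrinks the leading set, so Lemma \ref{movement}(ii) yields $b_k<a_k$. Since each type-$1$ move raises the cut size by at least $1$ while type-$2$ moves never lower it, the total number of type-$1$ moves $m_1=\sum_k a_k$ satisfies $m_1\le t$; summing $b_k<a_k$ over the non-final phases bounds the type-$2$ moves there by $m_1\le t$. For the single final phase, Lemma \ref{movement}(i) caps its moves by $O(a_{\mathrm{last}}+n)=O(m_1+n)=O(t+n)$. Hence the procedure performs $O(t+n)$ moves in all.

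To finish with the time bound I count phases. Each non-final phase has $b_k<a_k$ and therefore $a_k\ge 1$, so it raises the cut size by at least $1$; as the cut size is at most $m$, this already proves termination, and it shows the number of phases is at most $m_1+1\le t+1$. By Lemma \ref{timen}, each move and each verification of stability costs $O(n)$. There are $O(t+n)$ moves, one stability verification per phase (so $O(t+1)$ of them), and at most one swap per phase, each re-labeling the $(i(v),\Delta(v))$ array in $O(n)$. Adding the one-time $O(m)=O(n^2)$ construction of that array, the total is $O\big((t+n)\,n\big)+O\big((t+1)\,n\big)+O(n^2)=O(tn+n^2)$, matching the claim.

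The step I expect to be the main obstacle is bounding the \emph{number} of phases rather than the moves inside one phase. Because stability is asymmetric in $S_1$ and $S_2$, a swap can turn a stable cut into one with fresh (type-$2$) violations, so a priori the swap/stabilize cycle could repeat many times. The decisive leverage is Lemma \ref{movement}(ii): each non-final phase must spend strictly more type-$1$ than type-$2$ moves and hence buys at least one unit of cut-size growth, so the global budget $m_1\le t$ caps the phase count. Converting the per-phase additive $n$ slack of Lemma \ref{movement}(i) into a single additive $n^2$ in the running time — rather than a multiplicative $tn^2$ — is exactly what this phase bound secures.
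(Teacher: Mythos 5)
Your proposal is correct and follows essentially the same route as the paper: bound the type-$1$ moves by the cut-size increase $t$, use Lemma \ref{movement}(ii) to charge the type-$2$ moves of every non-final while-loop run against its type-$1$ moves, and invoke Lemma \ref{movement}(i) for the final run, then apply Lemma \ref{timen} per move. Your explicit phase count and accounting for the swaps and stability checks only make the paper's implicit bookkeeping more precise.
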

\begin{proof}  {It follows from  Lemma \ref{movement}(i) that}  there are a   number $m_1'$ ($\le m$) of type-1 moves in total, and {$|[T_1,T_2]|\ge|[S_1,S_2]|+m'_1$. By Lemma \ref{timen}, it
suffices to show that there are a total of $O(m_1'+n)$   moves}.

 Observe from Step \ref{swap} that each (implementation) of the while-loop {at Step \ref{while}} starts with a cut whose leading set has at least $n/2$ nodes. If this while-loop ends with a smaller leading set, by Lemma \ref{movement}(ii) it must be the case that the while-loop conducts type-1 moves more times than {conducting} type-2 moves. Therefore after $O(m_1')$ moves, the procedure {either terminates, or} implements a while-loop that ends with a leading set $S_1$ not smaller than one at the beginning of the while-loop. In the latter case, the until-condition at Step~\ref{until} is satisfied, and the procedure terminates.  The number of moves conducted by the last while-loop is $O(m_1'+n)$ as  implied by  Lemma~\ref{movement}(i).
\end{proof}

\subsection{$\mathcal Y$-preferred   schedules}
 When $\mathcal Y$ is more profitable,  the basic idea behind our algorithms for finding regret-proof schedules goes as follows: Given a stable cut $[S_1,S_2]$, we try to schedule nodes in $S_1$ with $\mathcal Y$  and nodes in
$S_2$   with $\mathcal N$ whenever possible. If not all nodes can be scheduled {this way},   we obtain another stable cut of larger size, from which
we repeat the process. In the following pseudo-code description, scheduling an {\em unscheduled} node changes the node to be {\em scheduled}.

\begin{algorithm}\label{al4}
{\em Input}:  {Cut $[R_1,R_2]$ of network $G$. {\em Output}: A schedule for $G$}
\end{algorithm}
\vspace{-0mm}\hrule
\begin{enumerate}
\item \vspace{-1mm}  Initial setting:  $\mathcal D_1\leftarrow \mathcal Y $, $\mathcal D_2\leftarrow \mathcal N $; $T_i\leftarrow\emptyset$, $S'_i\leftarrow R_i\setminus T_i$  ($i=1,2$)

\vspace{-2mm}\item {\bf Repeat}

  \vspace{-2mm}\item\hspace{1mm} {$[S_1,S_2]\leftarrow \text{\sc Prc\ref{proc:2}}(S'_1\cup T_2,  S_2'\cup T_1)$ \hfill{\scriptsize // $|[S_1,S_2]|\ge|[S'_1\cup T_2,  S_2'\cup T_1]|$}}\label{s1s2}

  \vspace{-2mm}\item \hspace{1mm} Set all nodes  of $G$  to be unscheduled

 \vspace{-2mm} \item\hspace{1mm}
  \textbf{While} $\exists$    unscheduled $v\!\in\! S_i$ ($i\!\in\!\{1,2\}$) whose decision is  $\mathcal D_i$  \textbf{do} schedule $v$\label{sv}

 \vspace{-2mm} \item\hspace{1mm}     {$T_i\leftarrow \{$scheduled nodes with decision $\mathcal D_i\}$, $S'_i\leftarrow S_i\setminus T_i$ ($i=1,2$)}
       \hfill{\scriptsize //$T_i\subseteq S_i$}\label{ti}

 \vspace{-2mm} \item {\bf Until} $S'_1=\emptyset$  \hfill{\scriptsize // {Until all nodes of $G$ are scheduled}}\label{untils}

 \vspace{-2mm} \item  Output the final schedule for $G$
 \vspace{1mm}
  \hrule
   \end{enumerate}
{Note that cuts $[S_1,S_2]$  returned by Procedure \ref{proc:2} at Step \ref{s1s2} are stable.} At the end of Step \ref{ti}, if $S'_1=\emptyset$, then $S'_2=\emptyset$  {(otherwise, every node    $v\in S'_2\subseteq S_2$ satisfies $d_{S_1}(v)=d_{T_1}(v)\le d_{T_2}(v)\le d_{S_2}(v)$, saying that    $[S_1,S_2]$ is not  stable.)} Thus the condition in Step \ref{untils} is equivalent to saying ``until all nodes of $G$ are scheduled''.
\begin{theorem}\label{the}
Algorithm \ref{al4} finds in $O(mn^2)$ time a regret-proof schedule with at least $n/2$ decisions of $\mathcal Y$.
\end{theorem}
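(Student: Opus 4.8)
The plan is to establish three things about Algorithm \ref{al4}: its output schedule is regret-proof, its set of $\mathcal Y$-decisions has at least $n/2$ nodes, and the running time is $O(mn^2)$. The first two will follow by tracking the cut $[S_1,S_2]$ maintained across the repeat-loop and showing that when the loop halts, the associated cut of the output schedule is exactly the stable cut returned by the last call to Procedure \ref{proc:2}, whose leading set already contains at least $n/2$ nodes by Lemma \ref{lem2}. The running time will follow from bounding the number of iterations and invoking the per-call bound of Lemma \ref{lem2}.

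First I would pin down the bookkeeping inside one iteration. After the while-loop at Step \ref{sv} halts, every scheduled node of $S_i$ carries decision $\mathcal D_i$ (the loop only schedules such nodes), so $T_i\subseteq S_i$ and the unscheduled sets are $S_i'=S_i\setminus T_i$. Since the loop is stuck, every unscheduled $v\in S_1'$ has rebel-decision $\mathcal N$ against the scheduled set $T_1\cup T_2$, i.e. $d_{T_1}(v)>d_{T_2}(v)$; and every unscheduled $v\in S_2'$ has decision $\mathcal Y$, i.e. $d_{T_1}(v)\le d_{T_2}(v)$ (ties favour $\mathcal Y$). These inequalities hold at termination of the loop regardless of the order in which nodes were scheduled, which is all I need.

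The crux is to show the cut strictly grows from one iteration to the next. Writing the cut in terms of the four-part partition $V=T_1\sqcup S_1'\sqcup T_2\sqcup S_2'$, the new input cut to Procedure \ref{proc:2} satisfies
\begin{gather*}
|[S_1'\cup T_2,\,S_2'\cup T_1]|-|[S_1,S_2]|=|[S_1',T_1]|-|[S_1',T_2]|+|[T_2,S_2']|-|[T_1,S_2']|.
\end{gather*}
Summing $d_{T_1}(v)>d_{T_2}(v)$ over $v\in S_1'$ gives $|[S_1',T_1]|\ge|[S_1',T_2]|+|S_1'|$, and summing $d_{T_1}(v)\le d_{T_2}(v)$ over $v\in S_2'$ gives $|[T_1,S_2']|\le|[T_2,S_2']|$; hence the right-hand side is at least $|S_1'|$. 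Because Procedure \ref{proc:2} never decreases the cut (Lemma \ref{lem2}), each iteration that does not terminate increases $|[S_1,S_2]|$ by at least $|S_1'|\ge1$. As the cut size is bounded by $m$, the repeat-loop performs $O(m)$ iterations and in particular terminates.

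When the loop halts we have $S_1'=\emptyset$; I would then recover $S_2'=\emptyset$ from stability as in the discussion following the algorithm (if some $v\in S_2'$ survived, then $d_{S_1}(v)=d_{T_1}(v)\le d_{T_2}(v)\le d_{S_2}(v)$, contradicting (\ref{stable})). Thus in the final iteration every node is scheduled with the decision dictated by its side, so the associated cut of the output schedule equals the stable cut $[S_1,S_2]$; by the regret-proof characterization the schedule is regret-proof, and $|S_1|\ge n/2$ (guaranteed by Lemma \ref{lem2}) yields at least $n/2$ decisions of $\mathcal Y$. For the running time, each iteration costs $O(t_jn+n^2)$ for its call to Procedure \ref{proc:2} (Lemma \ref{lem2}, with $t_j$ the cut increase inside that call) plus $O(n^2)$ for the Step \ref{sv} while-loop with the usual array maintenance; summing over the $O(m)$ iterations and using $\sum_j t_j\le m$ gives $O(mn^2)$. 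I expect the main obstacle to be the cut-growth computation together with the order-independence of the terminal inequalities for $S_1'$ and $S_2'$, since everything else reduces to invoking Lemma \ref{lem2} and the stability definition.
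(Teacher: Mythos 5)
Your proposal is correct and follows essentially the same route as the paper's proof: the same terminal inequalities $d_{T_1}(v)>d_{T_2}(v)$ on $S_1'$ and $d_{T_2}(v)\ge d_{T_1}(v)$ on $S_2'$, the same cut-difference identity showing the reset cut strictly grows (you note the slightly sharper bound of at least $|S_1'|$, which the paper does not need), the same $O(m)$ bound on iterations, and the same appeal to Lemma \ref{lem2} for both the $|S_1|\ge n/2$ guarantee and the $O(\sum_i(t_in+n^2))=O(mn^2)$ running time. The only cosmetic difference is that you re-derive the fact that $S_1'=\emptyset$ forces $S_2'=\emptyset$, which the paper states separately in the remark following the algorithm.
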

\begin{proof} Consider Step \ref{ti}
 setting $ S'_1\ne\emptyset$. Since nodes in $S_1'\cup S'_2$ cannot be scheduled,  we have  $d_{T_1}(v)>d_{T_2}(v)$ for every $v\in S'_1=S_1\setminus T_1$ and $d_{T_2}(v)\ge d_{T_1}(v) $ for any $v\in S'_2=S_2\setminus T_2$, {which gives}
 \begin{eqnarray*}
 0&<& \sum_{v\in S'_1} (d_{T_1}(v)-d_{T_{2}}(v))+ \sum_{v\in S'_2} (d_{T_2}(v)-d_{T_{1}}(v))\\
&=&(|[S'_1,T_1]|-|[S'_1,T_2]|)+(|[S'_2,T_2]|-|[S'_2,T_1]|)\\
&=&|[S'_1\cup T_2,S'_2\cup T_1]|-|[S_1'\cup T_1,S_2'\cup T_2]|.
\end{eqnarray*}
Thus cut $[S'_1\cup T_2,S'_2\cup T_1]$ {has its size  $t>|[S_1'\cup T_1,S_2'\cup T_2]|=|[S_1,S_2]|$.} Subsequently, at Step \ref{s1s2}, with  input $[S'_1\cup T_2,S'_2\cup T_1]$, Procedure~\ref{proc:2} returns a new cut $[S_1,S_2]$, of size at least $t$, {which is larger than   the old  one}. It follows that the repeat-loop can only repeat a number $k $ ($\le m$) of times. 

 {By Lemma \ref{lem2}, for $i=1,2,\ldots,k$, we assume that Procedure \ref{proc:2} in the $i$-th repetition (of Steps \ref{s1s2}--\ref{ti}) returns  in $O(t_in+n^2)$  time a cut whose size is $t_i$ larger than the size of its input. Then   $\sum_{i=1}^kt_i\le m$, and overall
   Step \ref{s1s2}   takes  $O(\sum_{i=1}^k(t_in+n^2))=O(mn^2)$ time}. The overall running time follows from the fact that $O(n^2)$ time is enough for finishing  a whole while-loop at Step  \ref{sv}.

Note {from Lemma \ref{lem2}} that  the final cut $[S_1,S_2]$ produced by Procedure \ref{proc:2} is stable and satisfies  $|S_1|\ge n/2$. {Since $[S_1,S_2]$ is the cut associated with the final schedule output, the theorem is proved.}
\end{proof}

 Similar to Remark \ref{rm}, the output of Algorithm~\ref{al4} specifies a  regret-proof  schedule for marketing one product such that at least a half of consumers buy the product.

\subsection{$\mathcal N$-preferred schedules}
 The goal of this subsection is to  design an algorithm for  finding a regret-proof schedule  with  as many $\mathcal N$ decisions as possible.  In the following Algorithm \ref{al5}, we work on a dynamically changing cut $[S_1,S_2]$ of $G$ whose size keeps nondecreasing.  Our algorithm consists of 2-layer nested repeat-loops.
  \begin{itemize}
  \item {\em Inner loop}: From any $[S_1,S_2]$, by moving violating nodes, we make it stable.
  Then we try to schedule nodes in $S_1$ with $\mathcal Y$ and nodes in $S_2$ with $\mathcal N$ whenever possible. If not all nodes can be scheduled, we reset $[S_1,S_2]$ to be a larger cut, and repeat; otherwise we obtain a schedule with associated cut $[S_1,S_2]$.

  \item {\em Outer loop}: After obtaining a schedule, we swap $S_1$ and $S_2$, and repeat.
  \item {\em Termination}: We stop when we obtain (consecutively) two schedules whose associated cuts have equal size.
  \item {\em Output}: Among the obtained schedules, we output the best one with a maximum number of $\mathcal N$ decisions
  \end{itemize}
  In the following pseudo-code, we use $r$ and $s$ to denote the sizes of cuts associated with the two schedules we find consecutively. We use
  $K$ to denote the largest number of $\mathcal N$ decisions we currently achieve  by some schedule.

\begin{algorithm}\label{al5}
{\em Input}: Network $G$. \quad {\em Output}: A regret-proof schedule for $G$.
\end{algorithm}
\hrule
\begin{enumerate}
\vspace{-2mm}\item   $\mathcal D_1\leftarrow \mathcal Y$, $\mathcal D_2\leftarrow \mathcal N $;  $[S_1,S_2]\leftarrow$ any  cut of $G$; $s\leftarrow 0$; $K\leftarrow0$

\vspace{-2mm}\item {\bf Repeat}

\vspace{-2mm}\item \hspace{1mm} $r\leftarrow s$;  $[S_1,S_2]\leftarrow[S_2,S_1]$  \hfill{\scriptsize // swap $S_1$ and $S_2$}\label{swp}
\vspace{-2mm}\item \hspace{1mm} {\bf Repeat}\label{inner}

\vspace{-2mm}  \item\hspace{3mm} \textbf{While} $\exists$ violating node $v$ w.r.t.  $[S_1,S_2]$ \textbf{do} move $v$ \hfill{\scriptsize //make $[S_1,S_2]$ stable}\label{nonstable}

\vspace{-2mm}  \item \hspace{3mm} Set all nodes  of $G$  to be unscheduled \label{uns}

\vspace{-2mm}  \item\hspace{3mm}
  \textbf{While} $\exists$    unscheduled $v\!\in\! S_i$ ($i\!\in\!\{1,2\}$) whose decision is  $\mathcal D_i$  \textbf{do} schedule~$v$\label{schedulev}

\vspace{-2mm}  \item\hspace{3mm}     {$T_i\leftarrow \{$scheduled nodes with decision $\mathcal D_i\}$, $S'_i\leftarrow S_i\setminus T_i$ ($i=1,2$)}\label{tis}
       \hspace{1mm}{\scriptsize //$T_i\subseteq S_i$}

 \vspace{-2mm}      \item \hspace{3mm} \textbf{If} $S_1'\ne\emptyset$ \textbf{then} $[S_1,S_2]\leftarrow[S'_1\cup T_2,S_2'\cup T_1]$  \hfill{\scriptsize //reset $[S_1,S_2]$ to be a larger cut}\label{reset}

 \vspace{-2mm} \item \hspace{1mm} {\bf Until} $S'_1=\emptyset$  \hfill{\scriptsize // until all nodes are scheduled}\label{untilempty}
 \vspace{-2mm}  \item \hspace{1mm} \textbf{If}  $|S_2|>K$ \textbf{then} $\pi\leftarrow $ the current schedule, $K\leftarrow|S_2|$\label{record}

 \vspace{-2mm}\item \hspace{1mm}   $s\leftarrow |[S_1,S_2]|$

 \vspace{-2mm}\item  \textbf{Until} $r=s$  \hfill{\scriptsize //until Steps \ref{inner}--\ref{untilempty} produce 2 schedules whose associated cut have equal size}\label{terminate}

\vspace{-2mm}\item Output $\pi$

\vspace{1mm}

\hrule
\end{enumerate}


Throughout the algorithm, the size of $[S_1,S_2]$ keeps nondecreasing, and may increase at Step~\ref{nonstable} (see Lemma \ref{movement}) and Step \ref{reset}.  Note that Steps \ref{schedulev} and \ref{tis} are exactly Steps \ref{sv} and \ref{ti} of Algorithm \ref{al4}.  So, as shown in the proof of Theorem \ref{the}, the resetting of $[S_1,S_2]$ at Step \ref{reset} increases the cut size.

\begin{lemma}\label{run}
Algorithm \ref{al5} runs in $O(mn^2)$ time.
\end{lemma}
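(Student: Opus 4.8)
The plan is to control the whole running time through the single monotone quantity $|[S_1,S_2]|$, and to split the cost into three parts: the loop-counting, the ``make-stable'' moves at Step~\ref{nonstable}, and the scheduling sweeps at Steps~\ref{schedulev}--\ref{tis}. The backbone observation is that by (M1)--(M2) the cut size never decreases during the entire run of Algorithm~\ref{al5}; it is strictly increased by every type-1 move and by every reset at Step~\ref{reset} (the latter established in the paragraph preceding this lemma), it is left unchanged by the swap at Step~\ref{swp}, and it is always at most $m$.

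First I would bound how often each loop body executes. Since the outer loop continues only when $r\neq s$, i.e. only when the inner loop strictly enlarged the cut during the last pass, and since $|[S_1,S_2]|$ is an integer that is nondecreasing and bounded by $m$, the outer loop runs $O(m)$ times. For the inner loop (Steps~\ref{inner}--\ref{untilempty}), each pass either performs a reset at Step~\ref{reset}, which strictly increases the cut and hence occurs $O(m)$ times over the whole run, or it exits with $S_1'=\emptyset$, which happens exactly once per outer iteration and hence $O(m)$ times in total. Therefore the inner-loop body, and in particular the make-stable while-loop at Step~\ref{nonstable}, is entered $O(m)$ times.

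Next I would bound the total number of moves. For one invocation of Step~\ref{nonstable} performing $m_1$ type-1 moves, Lemma~\ref{movement}(i) guarantees that it performs only $O(m_1+n)$ moves altogether. Summing over all $O(m)$ invocations, and using that the type-1 moves over the whole run number at most $m$ (each raises the cut by at least one), the grand total of moves is $O\!\left(m+m\cdot n\right)=O(mn)$. By Lemma~\ref{timen} each move, together with the verification that no violating node remains, costs $O(n)$, so Step~\ref{nonstable} contributes $O(mn\cdot n)=O(mn^2)$ in total; the $O(m)$-time creation of the array of Lemma~\ref{timen}, done once per inner pass, adds at most $O(m\cdot m)=O(m^2)=O(mn^2)$ since $m=O(n^2)$. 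Finally, Steps~\ref{schedulev}--\ref{tis} coincide with Steps~\ref{sv}--\ref{ti} of Algorithm~\ref{al4}, one sweep of which finishes in $O(n^2)$ time as noted in the proof of Theorem~\ref{the}; over the $O(m)$ inner passes this is $O(mn^2)$. Adding the three contributions gives the claimed $O(mn^2)$ bound.

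The hard part will be the move count, because type-2 moves do not increase the cut and so cannot be charged directly against the cut-size budget; a naive count of a single make-stable loop could be as large as $\Omega(mn)$, which would ruin the bound. The resolution is the amortization packaged in Lemma~\ref{movement}(i): bounding one make-stable loop by $O(m_1+n)$ lets the type-2 moves be paid for partly by the globally bounded type-1 moves and partly by an $O(n)$ charge per loop. It is exactly here that the preceding $O(m)$ bound on the number of inner-loop passes becomes indispensable, since the $O(n)$ per-loop charge accumulates to $O(mn)$ only because the loop runs $O(m)$ times.
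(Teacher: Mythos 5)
Your proof is correct and follows essentially the same route as the paper's: bound the number $\ell=O(m)$ of inner-loop passes via the monotone cut size, charge the moves as $O(\sum(m_i+n))=O(mn)$ using Lemma~\ref{movement}(i) with the global bound of $m$ on type-1 moves, apply Lemma~\ref{timen} for $O(n)$ per move, and add $O(n^2)$ per scheduling sweep. If anything, your accounting of the inner-loop passes (reset passes charged to cut increases, exit passes charged one per outer iteration) is slightly more careful than the paper's assumption that every non-final pass increases the cut.
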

\begin{proof}  An implementation of the inner repeat-loop executes Steps \ref{nonstable}-\ref{reset} at most $m$ times. From the termination condition at Step \ref{terminate}, we see that the outer repeat-loop runs $O(m)$ times. Furthermore, we may assume that the algorithm implements Steps \ref{nonstable}-\ref{reset} for a number of ${\ell}$ times, where the $i$-th implementation of Step \ref{nonstable} (resp. Step \ref{reset}) increases the cut size by $m_i$ (resp.  $n_i$), $i=1,2,\ldots,{\ell}$, such that $m_i+n_i\ge1$ for $i=1,2,\ldots,{\ell}-1$ and $m_{\ell}+n_{\ell}=0$. Since $\sum_{i=1}^{\ell}(m_i+n_i)\le m$, we have ${\ell}=O(m)$ and $\sum_{i=1}^{\ell}m_i=O(m)$. By Claim \ref{movement}(i), all implementations of Step \ref{nonstable} perform $O(\sum_{i=1}^{\ell}(m_i+n))=O(m+{\ell}n)=O(mn)$ moves, and thus, by Claim \ref{timen}, take $O(mn^2)$ time. Clearly all implementations of Steps \ref{uns}--\ref{reset} finish in $O({\ell}n^2)=O(mn^2)$ time. The overall implementation time of other steps  is  $O(mn)$.
\end{proof}

\paragraph{Performance.} Let $r^*\ge1$ denote the final common value of $r$ and $s$ in Algorithm~\ref{al5}. It is easy to see that the algorithm  implements Step \ref{nonstable} at least twice (as otherwise, $r^*=0$). Let $W_{\ell-1}$ (resp. $W_{\ell}$)  denote the  second-last (resp. last) implementation of (the while-loop at) Step \ref{nonstable}. Let $[K_1,K_2]$ and $[L_1,L_2]$  denote the cuts $[S_1,S_2]$ at the end of  $W_{\ell-1}$ and $W_{\ell}$, respectively.   It follows that both $[K_1,K_2]$ and $[L_1,L_2]$ are stable, and $r^*\le|[K_1,K_2]|\le|[L_1,L_2]|=r^*$. Therefore $[K_1,K_2]|=|[L_1,L_2]|=r^*$, implying that between  $W_{\ell-1}$ and $W_{\ell}$, no implementation of Step \ref{reset}
increases the cut size. After $W_{\ell-1}$, the algorithm does not change the cut $[S_1,S_2]=[K_1,K_2]$ (i.e., it schedules all nodes of $K_1$ with $\mathcal Y$, and all nodes of $K_2$ with $\mathcal N$) until it swaps $S_1$ and $S_2$ at Step \ref{swp}. Subsequently, $W_\ell$ starts with
\begin{gather}\label{start}
[S_1,S_2]=[K_2,K_1],\text{ where }[S_2,S_1]=[K_1,K_2]\text{ is stable.}
\end{gather}
  Since $W_{\ell}$ does not increase the cut size, any violating node $v$ satisfies  $d_{S_1}(v)=d_{S_2}(v)$ at the time it is moved. Therefore, recalling (\ref{stable}), the moves conducted by $W_\ell$ (if any) are  type-2 ones, which  move  nodes  from $S_2$ to $S_1$. Let $T$ ($\subseteq K_1$) denote the set of all these nodes moved.
It is clear that $V$ is the disjoint union of $K_2$, $L_2$ and $T$ such that
\begin{gather}
\text{$[K_2,K_1]=[K_2,L_2\cup T]$ and $[L_1,L_2]=[K_2\cup T, L_2]$.}\label{switch}
\end{gather}
 After $W_\ell$, the algorithm schedules all nodes of $L_1$ with $\mathcal Y$ and all nodes of $L_2$ with $\mathcal N$, finishing the last run of the inner repeat-loop.

\begin{clm}\label{transfer}
{If $T\ne\emptyset$}, then $T$ is an independent  set of graph $G$, and $d_{L_1}(v)=d_{L_2}(v)\ge1$ holds   for every $v\in T$.
\end{clm}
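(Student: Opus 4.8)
The plan is to exploit the two structural facts established immediately before the statement: that the last while-loop $W_\ell$ starts from $[S_1,S_2]=[K_2,K_1]$ with $[K_1,K_2]$ stable and performs \emph{only} type-2 moves, each leaving the cut size unchanged (so each moved node $v$ satisfies $d_{S_1}(v)=d_{S_2}(v)$ at the moment it is moved), and that, by (\ref{switch}), $L_1=K_2\cup T$ and $L_2=K_1\setminus T$ with $T\subseteq K_1$. The whole argument rests on colliding a ``no cut increase'' inequality with the stability of $[K_1,K_2]$.

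First I would order the nodes of $T$ as $v_1,\dots,v_k$ in the order $W_\ell$ moves them from $S_2$ to $S_1$. At the moment $v_i$ is moved, the current sets are $S_1=K_2\cup\{v_1,\dots,v_{i-1}\}$ and $S_2=K_1\setminus\{v_1,\dots,v_{i-1}\}$, with $v_i\in S_2$. Letting $a_i$ be the number of neighbors of $v_i$ among the earlier-moved nodes $\{v_1,\dots,v_{i-1}\}$, I would record $d_{S_1}(v_i)=d_{K_2}(v_i)+a_i$ and $d_{S_2}(v_i)=d_{K_1}(v_i)-a_i$. Since $W_\ell$ increases no cut size, the type-2 move of $v_i$ forces $d_{S_1}(v_i)=d_{S_2}(v_i)$, which rearranges to $d_{K_1}(v_i)-d_{K_2}(v_i)=2a_i\ge 0$.

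The crux is then to apply the stability of $[K_1,K_2]$ at $v_i$: as $v_i\in K_1$, condition (\ref{stable}) (for the leading set $K_1$) gives $d_{K_2}(v_i)\ge d_{K_1}(v_i)$, i.e.\ $d_{K_1}(v_i)-d_{K_2}(v_i)\le 0$. Combined with $2a_i\ge 0$ this pins down $a_i=0$ for every $i$. Reading $a_i=0$ as saying $v_i$ has no neighbor among the previously moved nodes, a straightforward induction over the move order yields that $T$ is an independent set; simultaneously $a_i=0$ gives $d_{K_1}(v_i)=d_{K_2}(v_i)$.

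Finally, using independence of $T$ together with $L_1=K_2\cup T$ and $L_2=K_1\setminus T$, I would compute for each $v_i\in T$ that $d_{L_1}(v_i)=d_{K_2}(v_i)+d_T(v_i)=d_{K_2}(v_i)$ and $d_{L_2}(v_i)=d_{K_1}(v_i)-d_T(v_i)=d_{K_1}(v_i)$, whence $d_{L_1}(v_i)=d_{L_2}(v_i)$. The lower bound $\ge 1$ follows because $G$ has no isolated node, so $v_i$ has at least one neighbor; as $V$ is the disjoint union of $L_1$ and $L_2$ and the two counts are equal, their common value must be a positive integer. The step I expect to be the main obstacle is the careful bookkeeping of the evolving sets $S_1,S_2$ through the move sequence, so that the ``$2a_i\ge 0$'' identity (from no cut increase) and the ``$\le 0$'' inequality (from stability of $[K_1,K_2]$) can be brought to bear on the same quantity; once the move-ordering is set up, their collision to $a_i=0$ is immediate and the rest is routine.
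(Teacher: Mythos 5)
Your proof is correct and rests on exactly the same two pillars as the paper's: the stability of $[K_1,K_2]$ at the start of $W_\ell$ and the fact that $W_\ell$ cannot increase the cut size, so each moved node sees equal neighbor counts on both sides at the moment of its move. The only difference is presentational — you run a direct computation with the counter $a_i$ and conclude $a_i=0$, whereas the paper derives the same conclusion by contradiction on a hypothetical adjacent pair in $T$; both then obtain $d_{L_1}(v)=d_{L_2}(v)\ge 1$ by the same parity/connectedness observation.
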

\begin{proof}
Suppose on the contrary that two nodes $x,y\in T$ are adjacent, and the {while-loop $W_\ell$ moves} $x$   earlier than  moving $y$ (from $S_2$ to $S_1$).  {By (\ref{start}), at the beginning of $W_\ell$, cut} $[S_2,S_1]$ is   stable. Therefore    $d_{S_2}(v)\le d_{S_1}(v)$ holds for all $v\in S_2$ at any time of  this while-loop. At the time {$W_\ell$} considers $y$, node $x$ has been moved to $S_1$ and  $d_{S_2}(y)=d_{S_1}(y)$. The adjacency of $x$ and $y$ implies that $d_{S_2}(y)>d_{S_1}(y)$ and $y\in S_2$ hold before $x$ is removed from $S_2$, which is a contradiction. So $T$ is an independent set. It follows  that throughout the while-loop,   $d_{S_1}(v)=d_{S_2}(v)$ holds for any $v\in T$. Moreover, $ d_{S_2}(v)\ge1$   for any $v\in T$ follows from the fact that $G$ is connected, and $T$ is independent.
\end{proof}

\begin{theorem}\label{per}
Algorithm \ref{al5} finds   a regret-proof schedule of $G$   that ensures at least $\max\{\sqrt{n+1}~-1,\frac12{(n-\alpha)}\}$ decisions of $\mathcal N$, where $\alpha$ is the independence number of $G$.
\end{theorem}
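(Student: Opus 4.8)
The plan is to read off both required lower bounds from the structural facts already assembled in the Performance paragraph and in Claim~\ref{transfer}, namely that $V$ is the disjoint union of $K_2$, $L_2$ and the (independent) set $T$, with $K_1=L_2\cup T$ and $L_1=K_2\cup T$. First I would fix the quantity to be bounded: the output $\pi$ realizes $K$ decisions of $\mathcal N$, and $K\ge\max\{|K_2|,|L_2|\}$. This is because the schedule associated with the stable cut $[K_1,K_2]$ obtained at the end of $W_{\ell-1}$ and the one associated with the stable cut $[L_1,L_2]$ obtained at the end of $W_\ell$ are both produced and passed through Step~\ref{record}, so $K$ is at least each of their $\mathcal N$-counts $|K_2|$ and $|L_2|$. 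Since every recorded schedule is built from a stable cut, the cut associated with $\pi$ is stable, so $\pi$ is regret-proof.

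For the bound $K\ge\tfrac12(n-\alpha)$ I would simply use $n=|K_2|+|L_2|+|T|$ together with the fact that $T$ is independent (Claim~\ref{transfer}), whence $|T|\le\alpha$. Then $|K_2|+|L_2|=n-|T|\ge n-\alpha$, and $K\ge\max\{|K_2|,|L_2|\}\ge\tfrac12(|K_2|+|L_2|)\ge\tfrac12(n-\alpha)$.

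The harder bound $K\ge\sqrt{n+1}-1$ I would reduce to the multiplicative estimate $|T|\le|K_2|\cdot|L_2|$: granting it, $n=|K_2|+|L_2|+|T|\le K+K+K^2=(K+1)^2-1$, which rearranges to $K\ge\sqrt{n+1}-1$. To establish the estimate I would combine Claim~\ref{transfer} with the stability of $[L_1,L_2]$, whose leading set is $L_1=K_2\cup T$. Since $T$ is independent, for $v\in T$ we have $d_{L_1}(v)=d_{K_2}(v)$, so Claim~\ref{transfer} gives $d_{K_2}(v)=d_{L_2}(v)\ge1$; summing yields $|T|\le\sum_{v\in T}d_{K_2}(v)=|[T,K_2]|$. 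Next, for $v\in K_2\subseteq L_1$ stability of $[L_1,L_2]$ gives $d_{L_2}(v)\ge d_{L_1}(v)=d_{K_2}(v)+d_T(v)\ge d_T(v)$, and summing over $K_2$ gives $|[K_2,L_2]|\ge|[K_2,T]|$. Chaining with the trivial $|[K_2,L_2]|\le|K_2|\cdot|L_2|$ produces $|T|\le|[T,K_2]|\le|[K_2,L_2]|\le|K_2|\cdot|L_2|$, as wanted. The degenerate case $T=\emptyset$ is separate but easy: then $[L_1,L_2]=[K_2,K_1]$ and $n=|K_2|+|L_2|$, so $K\ge n/2$, which dominates both targets.

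The main obstacle will be the $\sqrt{n+1}-1$ bound, specifically the passage from the apparently weak information ``each node of $T$ has at least one neighbor in $K_2$'' to the quadratic bound $|T|\le|K_2|\cdot|L_2|$. The key realization is that the one-sided stability inequality applied to the nodes of $K_2$ lying on the leading side of $[L_1,L_2]$ upgrades $|T|\le|[T,K_2]|$ into $|[T,K_2]|\le|[K_2,L_2]|$, so that the edges from $T$ are controlled not by $T$ itself but by the bipartite density between $K_2$ and $L_2$; this is precisely what converts the additive decomposition $n=|K_2|+|L_2|+|T|$ into the quadratic relation $n\le(K+1)^2-1$.
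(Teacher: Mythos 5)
Your proposal is correct and follows essentially the same route as the paper: the same decomposition $V=K_2\sqcup L_2\sqcup T$, the same use of Claim~\ref{transfer} plus one-sided stability to trap $|T|$ between $|T|\le|[\cdot,\cdot]|$ and $|[K_2,L_2]|\le|K_2|\cdot|L_2|$, and the same algebra yielding $\sqrt{n+1}-1$. The only (immaterial) differences are that you argue directly rather than by contradiction, and you count edges via $|[T,K_2]|$ using the stability of $[L_1,L_2]$ where the paper counts via $|[L_2,T]|$ using the stability of $[K_1,K_2]$ --- a symmetric variant of the same step.
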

\begin{proof}
Note that the schedule output by the algorithm has its associated cut stable. Thus the algorithm does output a regret-proof schedule. {Suppose the output schedule ensures a number of $k$ decisions of $\mathcal N$. Since the algorithm has scheduled all nodes of $K_2$ (resp. $L_2$) with $\mathcal N$, Step \ref{record} guarantees that \[k\ge\max\{|K_2|,|L_2|\}\ge(|V|-|T|)/2\ge(|V|-\alpha)/2,\] as $V$ is the disjoint union of $K_2,L_2,T$, and $T$ is either empty or an independent set of $G$. It remains to prove    $k\ge\lambda:=\sqrt{n+1}-1$.}

 {Suppose on the contrary that $k<\lambda$, saying $|K_2|<\lambda$ and $|L_2|<\lambda$.}
 It follows that $|[K_2,L_2]|\le|K_2|\cdot|L_2|\le \lambda^2$ and $|T|=|V\setminus K_1\setminus L_2|=|V|-|K_2|-|L_2|>n-2\lambda$. Notice from Claim \ref{transfer} that  {each node of $T$ is adjacent to at least one node of $L_2$, implying} $[L_2,T]\ge|T|>n-2\lambda$. By {(\ref{start}) and (\ref{switch})}, the stability of $[K_1,K_2]=[L_2\cup T, K_2]$ implies that   {every node $v\in L_2$ satisfies   $    d_{K_2}(v)\ge d_{K_1}(v)\ge d_T(v)$, giving $ [K_2,L_2 ]\ge [T,L_2]$. Hence $\lambda^2>n-2\lambda$, implying $\lambda>\sqrt{n+1}-1$}, a contradiction.
\end{proof}

\section{Conclusion}
 In this paper, we have studied, from an algorithmic point of view, the marketing schedule problem for promoting products with negative externalities, aiming at profit maximization (from the seller's perspective) and regret-free decisions  (from the consumers' perspective). We have shown that the problem of finding a schedule with maximum profit is NP-hard and admits constant approximation. We find in strongly polynomial time  schedules that lead to regret-free decisions. These regret-proof schedules have satisfactory performance in terms of profit maximization, while it is left open whether both regret-proof-ness and constant profit approximation can be guaranteed in case of product $\mathcal N$ being more profitable.

 Our model and results apply to marketing one or two (types of) products with negative externalities in undirected social networks.  An interesting question is what happens when marketing three or more (types of) products and/or the network is directed.

\paragraph{Acknowledgments.}  The authors are indebted to Professor Xiaodong Hu for stimulating and helpful discussions.

 \bigskip
\appendix
\section*{APPENDIX}
\section{Proof of Theorem \ref{maxN}} \label{ap:a}
  We prove the NP-hardness of maximizing the number of $\mathcal N$ decisions 
 by
reduction from the {\em 3-OCC-MAX-2SAT} problem. It is a restriction of the MAX-2SAT problem, which, given a collection of disjunctive clauses of literals, each clause having at most
two literals, and each literal occurring in at most three clauses, is to find a truth assignment to satisfy as many clauses
as possible. It is known that  3-OCC-MAX-2SAT is NP-hard \cite{py91}.


\paragraph{Construction.}   Consider  any instance $I$ of the 3-OCC-MAX-2SAT problem:   $N$ boolean variables $x_1,x_2,\ldots,$ $x_N$  and $M$ clauses $y_j=(x_{j1}\vee x_{j2})$, $j=1,2,\ldots,M$,
where {$x_{j1},x_{j2}\in \{x_1,x_2,\cdots,x_N,\neg x_1,\neg x_2,\cdots, \neg x_N\}$, $j=1,2,\ldots,M$.} We construct an instance   $G=(V,E)$ of the
rebel scheduling problem in polynomial time 
 as follows.

  \begin{itemize}
  \item Create a pair of {\em literal nodes} $x_i$ and $\neg x_i$ representing, respectively, variable $x_i$ and its negation, $i=1,2,\ldots,N$;
    \item Create a {\em clause node} $y_j$ representing clause $y_j$, $j=1,2,\ldots,M$;
    \item Link literal node $x$ with clause node $y$ iff literal $x$ occurs in clause $y$;
    \item Create a {\em gadget} $G_i$ for each pair of literal nodes $x_i$ and $\neg x_i$, $i=1,2,\ldots,N$ (see Fig. \ref{f1}) as follows: let $L=10N+M$,
    \begin{itemize}
   \item  add four groups of nodes,
$A_i=\{a^i_1,a^i_2 \}$, $B_i=\{b^i_1,b^i_2\}$, $C_i=\{c^i_1,c^i_2,\ldots, c^i_{9}\}$, $D_i=\{d^i_{k1},d^i_{k2},\ldots, d^i_{kL}:k=1,2,\ldots,  9\}$;
\item link $x_i$ and $\neg x_i$ with all $13$ nodes in $A_i\cup B_i\cup C_i$;
\item link $b^i_1$ and $b^i_2$ with all nodes in $C_i$;
\item link $c^i_k$ with all nodes in $\{c^i_{k+1},d^i_{k1},d^i_{k2},\ldots, d^i_{kL}\}$ for $k=1,2,\ldots,9$, where $c^i_{10}=c^i_1$.
    \end{itemize}
  \end{itemize}
Clearly, $|V|=2N+M+N(13+9L)=M+(15+9L)N$. Clause nodes are not contained in any gadget. Each literal node is contained in a unique gadget $G_i$;
it has exactly   $13$ neighbors in $G_i$, and at most 3 neighbors outside $G_i$, which correspond to the clauses containing it. Each node in
$A_i$ has exactly two neighbors $x_i$ and $\neg x_i$. Each $C_i$ induces a cycle. All nodes in $D_i$ are pendant.

 {Let $opt(I)$ denote the optimal value for the 3-OCC-MAX-2SAT instance $I$. Let $opt(G)$ denote the maximum number of $\mathcal N$ decisions  contained in the outcome of a schedule for $G$. We will prove in Lemmas \ref{cl} and \ref{ok} that $opt(G)=opt(I)+ (5+9L )N$, which establishes Theorem \ref{maxN}. Under the optimality, we will show that the literal nodes with $\mathcal Y$ decisions in an optimal schedule correspond  to TRUE literals   in an optimal truth assignment. The gadget $G_i$ is used to guarantee that exactly one of $x_i$ and $\neg x_i$ chooses $\mathcal Y$.}


\begin{figure}
 \begin{center}
  \includegraphics[width=11cm]{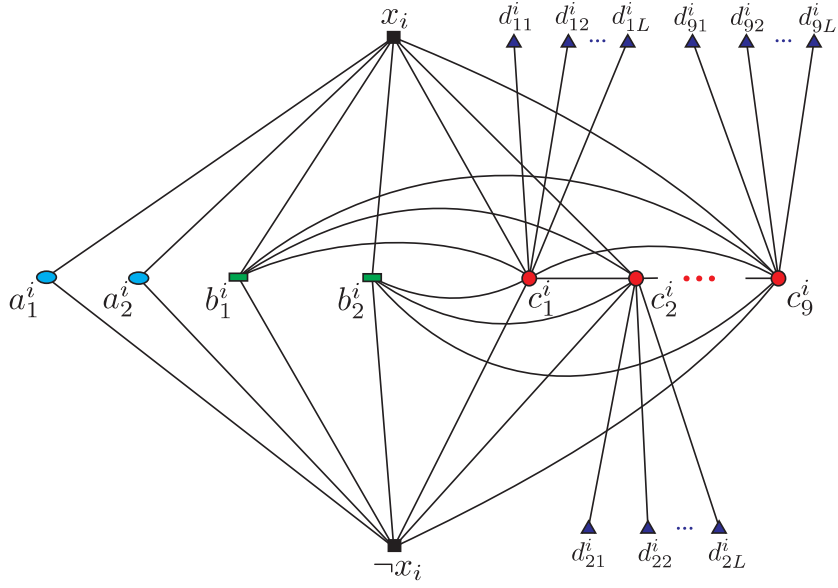}\\
  \end{center}
  \caption{Gadget $G_i$.}\label{f1}
\end{figure}

\paragraph{Schedule.} We construct a schedule $\pi$ for $G$ under which $opt(I)$ clause nodes, $N$ literal nodes and all $N(4+9L)$ nodes in $\cup_{i=1}^NA_i\cup B_i\cup D_i$ choose $\mathcal N$, which proves the following lemma.

\begin{lemma}  $opt(G)\geq opt(I)+(5+9L )N.$ \label{cl}
\end{lemma}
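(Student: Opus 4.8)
The plan is to start from an optimal truth assignment of $I$ satisfying $opt(I)$ clauses, and to turn it into a schedule $\pi$ of $G$ in which the literal node of each TRUE literal chooses $\mathcal Y$ while (almost) everything else chooses $\mathcal N$. For each variable, using the symmetry of the gadget in $x_i$ and $\neg x_i$, I relabel so that $x_i$ is the TRUE literal and $\neg x_i$ the FALSE one. I would build $\pi$ in four phases and then verify every decision directly from the rebel rule: a node chooses the product that is \emph{not} in the strict majority among its already-scheduled neighbours, ties being resolved in favour of $\mathcal Y$.

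Phase~1 schedules all TRUE literal nodes. Since literal nodes are pairwise non-adjacent and all their neighbours (gadget nodes and clause nodes) are still unscheduled, each sees a $0$–$0$ tie and chooses $\mathcal Y$. Phase~2 schedules every \emph{satisfied} clause node $y_j$. At this moment all TRUE literals are already $\mathcal Y$ and no FALSE literal has been scheduled yet, so $y_j$ sees at least one $\mathcal Y$ neighbour and no $\mathcal N$ neighbour, hence chooses $\mathcal N$. This is exactly where a tie-breaking subtlety must be handled: a clause satisfied by a single literal would tie to $\mathcal Y$ if both of its literal neighbours were present, so it is crucial to schedule satisfied clauses \emph{before} the FALSE literals. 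Unsatisfied clauses are deferred to the last phase, where their decisions are irrelevant to the count since they only border literal nodes whose decisions are already fixed.

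Phase~3 processes the gadgets one at a time; because every non-literal gadget node borders only its own $x_i,\neg x_i$ and other nodes of the same gadget, the gadgets do not interfere. Within $G_i$ I would schedule, in order: $a^i_1,a^i_2$ and $b^i_1,b^i_2$ (each sees only $x_i=\mathcal Y$, hence chooses $\mathcal N$); then the cycle nodes $c^i_1,\dots,c^i_8$ along the cycle (each sees $x_i=\mathcal Y$ and at most one earlier cycle-neighbour, also $\mathcal Y$, on the $\mathcal Y$ side against $b^i_1=b^i_2=\mathcal N$ — at most $2$ versus $2$ — so it faces no strict $\mathcal Y$ majority and chooses $\mathcal Y$); then $\neg x_i$; then the last cycle node $c^i_9$; and finally all pendant nodes of $D_i$. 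Here $\neg x_i=\mathcal N$ because it then sees $x_i$ and $c^i_1,\dots,c^i_8$ all $\mathcal Y$ ($9$ votes) against $a^i_1,a^i_2,b^i_1,b^i_2$ and its $\le 3$ already-scheduled clauses ($\le 7$ votes of $\mathcal N$), a strict $\mathcal Y$ majority; and each pendant $d^i_{kl}$ sees its single neighbour $c^i_k=\mathcal Y$ and chooses $\mathcal N$.

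The hard part is forcing all nine cycle nodes to choose $\mathcal Y$ even though $C_i$ is an odd cycle: in any ordering some node has both cycle-neighbours earlier, and with $x_i=\mathcal Y$ plus two earlier cycle-$\mathcal Y$'s against only $b^i_1,b^i_2$ it would face a $3$–$2$ $\mathcal Y$ majority and flip to $\mathcal N$. This is precisely what inserting $\neg x_i=\mathcal N$ before the last cycle node repairs: $c^i_9$ then sees $x_i,c^i_8,c^i_1$ ($3$ of $\mathcal Y$) against $\neg x_i,b^i_1,b^i_2$ ($3$ of $\mathcal N$), a tie, and chooses $\mathcal Y$. Thus the two competing requirements — $\neg x_i$ needing enough scheduled $\mathcal Y$ neighbours to decide $\mathcal N$, and $c^i_9$ needing the extra $\mathcal N$ vote from $\neg x_i$ — are met simultaneously by putting eight cycle nodes before $\neg x_i$ and the ninth after. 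Finally I count: each gadget contributes $\mathcal N$ decisions from $A_i,B_i$ ($4$), $D_i$ ($9L$) and $\neg x_i$ ($1$), i.e. $5+9L$; summing over the $N$ gadgets and adding the $opt(I)$ satisfied clause nodes gives $opt(I)+(5+9L)N$ decisions of $\mathcal N$, whence $opt(G)\ge opt(I)+(5+9L)N$.
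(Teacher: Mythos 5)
Your proof is correct and follows essentially the same construction as the paper's: schedule the TRUE literals first, then the clause nodes, then each gadget in the order $A_i\cup B_i$, $c^i_1,\dots,c^i_8$, $\neg x_i$, $c^i_9$, $D_i$, with the same tie-breaking analysis at every node. The only (harmless) deviations are that you defer the unsatisfied clause nodes to the very end rather than scheduling all $M$ clause nodes up front, and you count $x_i$ as a neighbour of $\neg x_i$ even though the construction gives each literal node exactly the $13$ neighbours in $A_i\cup B_i\cup C_i$ (so $x_i\not\sim\neg x_i$); neither changes any decision or the final count $opt(I)+(5+9L)N$.
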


\begin{proof}

{Let $T$ be the set of  TRUE literals in  an optimal truth assignment   of $I$. Then $T$ is an independent set of literal nodes in $G$ such that} 
for each $i=1,2,\ldots,N$, exactly one of $x_i$ and  $\neg x_i$ is contained in $T$.
 The schedule $\pi$ proceeds in two stages.

 In the first stage, $\pi$
schedules the {(literal)} nodes in $T$ and then the {$M$} clause nodes. Since {$T$ is independent, all its nodes }
choose $\mathcal Y$. {Therefore, the $opt(I)$ clause nodes (which correspond to the satisfied clauses)} all choose $\mathcal N$. 

In the second stage, $\pi$ schedules {gadgets one after another}  in an arbitrary order.  For each gadget
$G_i$, {let $x$ be $x_i$ or $\neg x_i$ whichever belongs to $T$ and thus has chosen $\mathcal Y$.  Within subnetwork $G_i$, schedule $\pi$ proceeds in} 
five steps. (1) $\pi$ schedules {the nodes in the independent set $A_i\cup B_i$ first; obviously these nodes} 
all choose
$\mathcal N$ {due to their common neighbor $x$}. (2) Then $\pi$ schedules $c^i_1$, $c^i_2$, $\ldots$, $c^i_{8}$ {in this order. When $c^i_1$ is scheduled, she has exactly one neighbor choosing $\mathcal Y$, i.e., $x$, and two neighbors choosing $\mathcal N$, i.e., $b^i_1,b^i_2$. Therefore $c^i_1$ chooses $\mathcal Y$. Inductively, for $k=2,3,\ldots,8$, given $\pi[c^i_{k-1}]=\mathcal Y$, when $\pi$ schedules $c^i_k$, the node $c^i_k$ has exactly two neighbors choosing $\mathcal Y$ (i.e., $x,c^i_{k-1}$) and exactly two neighbors choosing $\mathcal N$ (i.e., $b^i_1,b^i_2$), which implies $\pi[c^i_k]=\mathcal Y$.}
(3) Next, $\pi$ schedules {$\neg x$. At that time, inside $G_i$ node $\neg x$ has exactly $|A_i\cup B_i|=4$ neighbors   choosing $\mathcal N$ and $|C_i|-1=8$ neighbors choosing $\mathcal Y$;
outside $G_i$, node $\neg x$ has at most 3 neighbors.  It follows that $\neg x$ chooses $\mathcal N$. (4) Now $\pi$ schedules $c^i_{9}$. At this time $c^i_{9}$ has exactly three neighbors choosing $\mathcal Y$ (i.e., $x$, $c^i_{8}$, $c^i_1$) and exactly three neighbors choosing $\mathcal N$ (i.e., $b^i_1,b^i_2,\neg x$)}. Therefore $c^i_{9}$ chooses $\mathcal Y$ as all other nodes of $C_i$ do.
(5) In the last step, $\pi$ schedules {the nodes in $D_i$, all with decisions $\mathcal N$.}

 {Since $\pi$ schedules each $G_i$ with $|A_i|+|B_i|+|\{\neg x\}|+|D_i|=5+9L$ decisions of $\mathcal N$, it follows that $\pi$ schedules $G$ with $opt(I)+N(5+9L)$ decisions of $\mathcal N$, establishing the lemma.}
\end{proof}

\paragraph{Assignment.}  Let $\pi^*$ be a schedule for $G$ that leads to a maximum number $opt(G)$ of $\mathcal N$ decisions.   To establish the reverse inequality of the one in Lemma~\ref{cl}, we will construct a truth assignment for $I$ based on $\pi^*$'s schedule of literal node. Notice from Lemma \ref{cl}, $|V|=M+(15+9L)N$ and $L=10N+M $ that $opt(G)>N(5+9L)= |V|-L$.

  \begin{clm}\label{sy}
 $\pi^*$ schedules all $9N$ nodes in $\cup_{i=1}^NC_i$ with decision $\mathcal Y$, and therefore (by the   maximality of $opt(G)$) schedules all nodes in $D^i_k$ after $c_k^i$ with $\mathcal N$ decisions for any $i=1,2,\ldots,N$ and $k=1,2,\ldots,9$.
 \end{clm}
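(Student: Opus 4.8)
The plan is to lean entirely on the single numerical inequality $opt(G) > |V| - L$ established just before the claim: since $\pi^*$ attains $opt(G)$ many $\mathcal{N}$-decisions, strictly fewer than $L$ nodes receive decision $\mathcal{Y}$ under $\pi^*$. The whole argument then hinges on one structural feature of the gadget, namely that each cycle node $c^i_k$ owns exactly $L$ private pendant neighbors $d^i_{k1},\ldots,d^i_{kL}$ (the block $D^i_k$), so that the decision of $c^i_k$ dictates the decisions of a whole block of $L$ pendants.

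First I would pin down the cycle nodes by contradiction. Suppose some $c^i_k$ gets $\mathcal{N}$ under $\pi^*$. I would then check, purely from the rebel rule, that \emph{every} pendant $d^i_{kj}$ is forced to choose $\mathcal{Y}$, no matter where $\pi^*$ places it relative to $c^i_k$: if $d^i_{kj}$ is scheduled after $c^i_k$ it sees its unique neighbor holding $\mathcal{N}$, so $\mathcal{Y}$ is the minority choice and it picks $\mathcal{Y}$; if it is scheduled before $c^i_k$ it sees no decided neighbor ($c^i_k$ is its only neighbor and is still undecided), so it picks $\mathcal{Y}$ by the tie-breaking rule. Either way all $L$ pendants take $\mathcal{Y}$, giving at least $L$ decisions of $\mathcal{Y}$ and hence at most $|V|-L$ decisions of $\mathcal{N}$, contradicting $opt(G) > |V|-L$. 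This forces $\pi^*[c^i_k]=\mathcal{Y}$ for all $i$ and $k$, which is the first assertion.

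With every cycle node fixed at $\mathcal{Y}$, I would derive the second assertion from maximality. Given $\pi^*[c^i_k]=\mathcal{Y}$, a pendant $d^i_{kj}$ scheduled after $c^i_k$ automatically chooses $\mathcal{N}$ by the minority rule, whereas one scheduled before $c^i_k$ would choose $\mathcal{Y}$. To exclude the latter, I would relocate such a pendant to sit immediately after $c^i_k$ and argue this strictly increases the number of $\mathcal{N}$-decisions while disturbing nothing else: a pendant is adjacent only to $c^i_k$, so moving it changes no other node's neighbor-view, and at $c^i_k$'s turn it merely deletes one already-counted $\mathcal{Y}$-neighbor, which preserves the inequality $\#\mathcal{N}\ge\#\mathcal{Y}$ that makes $c^i_k$ select $\mathcal{Y}$. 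Thus $c^i_k$'s decision is unchanged and the relocated pendant flips from $\mathcal{Y}$ to $\mathcal{N}$, contradicting optimality of $\pi^*$; hence $\pi^*$ schedules all of $D^i_k$ after $c^i_k$, each with decision $\mathcal{N}$.

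The main obstacle, and the step I would treat most carefully, is the first one: one must notice that a cycle node deciding $\mathcal{N}$ \emph{poisons its entire pendant block toward} $\mathcal{Y}$ irrespective of the ordering, and then weigh this against the tight budget of fewer than $L$ total $\mathcal{Y}$-decisions. The gadget parameters $|C_i|=9$ and $|D^i_k|=L$ are engineered exactly so that a single mis-chosen cycle node already overspends that budget. By comparison, the relocation argument in the second step is routine once one records that removing a $\mathcal{Y}$-neighbor can never overturn a $\mathcal{Y}$-decision under the ``buy $\mathcal{Y}$ iff $\#\mathcal{N}\ge\#\mathcal{Y}$'' convention.
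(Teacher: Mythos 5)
Your proof is correct and follows essentially the same route as the paper: the first assertion is exactly the paper's one-line contradiction (a cycle node deciding $\mathcal N$ forces all $L$ pendants in its block to choose $\mathcal Y$ regardless of ordering, violating $opt(G)>|V|-L$). For the second assertion the paper merely appeals to maximality in a parenthetical, whereas you supply the explicit (and valid) relocation argument showing a pendant scheduled before $c^i_k$ could be moved after it to gain an $\mathcal N$ decision without disturbing any other choice.
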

 \begin{proof}  If $\pi^*$  schedules some $c^i_k\in C_i$ with decision $\mathcal N$, then all the $L$ nodes $d^i_{k1},d^i_{k2},\ldots, d^i_{kL}$ choose $\mathcal Y$ under $\pi^*$, a contradiction to $opt(G)>|V|-L$.
 \end{proof}

 \begin{clm} {Let $T^*$ be the set of literal nodes who choose $\mathcal Y$ under $\pi^*$.} For each $i=1,2,\ldots,N$, at most one of $x_i$ and  $\neg x_i$ is contained in $T^*$.\label{ce}\end{clm}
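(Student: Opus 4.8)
The plan is to argue by contradiction: suppose that for some $i$ both literal nodes $x_i$ and $\neg x_i$ lie in $T^*$, i.e.\ both choose $\mathcal Y$ under $\pi^*$. I would combine this with Claim~\ref{sy}, which guarantees that every node of $C_i$ chooses $\mathcal Y$ and that each pendant node $d^i_{kj}\in D_i$ is scheduled \emph{after} its unique neighbour $c^i_k$. The underlying intuition is that the nine-cycle $C_i$, each of whose nodes is also adjacent to $x_i,\neg x_i,b^i_1,b^i_2$, cannot stay entirely $\mathcal Y$ once both literals are pushed to $\mathcal Y$, so one of the two forced facts must break.

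First I would single out the node $c^i_k\in C_i$ that $\pi^*$ schedules \emph{last} among $C_i$. Its two cycle-neighbours $c^i_{k-1},c^i_{k+1}$ are then scheduled earlier and, by Claim~\ref{sy}, both choose $\mathcal Y$. The key bookkeeping observation is that the only neighbours of $c^i_k$ which can possibly be scheduled earlier \emph{and} carry decision $\mathcal N$ are $b^i_1$ and $b^i_2$: the literals $x_i,\neg x_i$ and the cycle-neighbours are all $\mathcal Y$, while the pendants $d^i_{kj}$ are scheduled after $c^i_k$ and hence contribute nothing at its turn. Thus at the moment $c^i_k$ is scheduled it has at most two $\mathcal N$-neighbours.

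Next I would split into two cases according to whether a literal precedes $c^i_k$. If at least one of $x_i,\neg x_i$ is scheduled before $c^i_k$, then the $\mathcal Y$-count among $c^i_k$'s scheduled neighbours is at least three (namely $c^i_{k-1}$, $c^i_{k+1}$, and that literal), strictly exceeding the at-most-two $\mathcal N$-count, so the rebel rule forces $c^i_k$ to choose $\mathcal N$, contradicting Claim~\ref{sy}. Otherwise both literals are scheduled after $c^i_k$, hence after all of $C_i$; here I consider the earlier-scheduled of the two, say $x_i$. At its turn all nine $\mathcal Y$-nodes of $C_i$ are already scheduled, whereas $x_i$'s remaining neighbours --- the two nodes of $A_i$, the two of $B_i$, and the at most three clause nodes containing it --- number at most seven. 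So $x_i$ sees a strict $\mathcal Y$-majority ($9>7$) and must choose $\mathcal N$, contradicting $x_i\in T^*$.

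Both cases being impossible, at most one of $x_i,\neg x_i$ can belong to $T^*$. The step I expect to be the crux is precisely the decision-counting at the critical moment: I must invoke Claim~\ref{sy} carefully to certify both that the pendants play no role (they come later) and that the only available $\mathcal N$-votes at $c^i_k$ are $b^i_1,b^i_2$, and separately that each literal's degree outside $C_i$ is capped at $2+2+3=7<9=|C_i|$. Once these two counts are pinned down, the majority/tie-breaking convention of the rebel rule closes both cases immediately, with no further computation needed.
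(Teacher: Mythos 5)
Your proof is correct and follows essentially the same route as the paper's: both hinge on Claim~\ref{sy}, the observation that a literal node has at most $2+2+3=7$ neighbours outside $C_i$ versus the $9$ all-$\mathcal Y$ nodes of $C_i$, and a neighbour count at the last-scheduled node of $C_i$ showing that $\neg x$, $b^i_1$, $b^i_2$ must all carry decision $\mathcal N$ there. The only difference is presentational --- you argue by contradiction with an explicit two-case split on whether a literal precedes the last node of $C_i$, while the paper derives directly that a $\mathcal Y$-literal must precede it and then reads off $\pi^*[\neg x]=\mathcal N$ from the same count.
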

 \begin{proof}
 {Suppose that $\pi^*$ schedules some literal node $x\in\{x_i,\neg x_i\}$ with decision $\mathcal Y$ for some $i\in\{1,2,\ldots,N\}$.
  Note that $x$  has at most $16$ neighbors; 9} of them belong to $C_i$ and are scheduled by $\pi^*$ with decisions $\mathcal Y$ (see Claim \ref{sy}).
  It must be the case that $\pi^*$ schedules $x$ before the last scheduled node $c\in C_i$. At the time
$\pi^*$ schedules $c$,  by Claim \ref{sy}, $c$ has exactly two neighbors in $C_i$ choosing $\mathcal Y$, and has no neighbor in
$D_i$ that has been scheduled. The other four neighbors of $c$ are $x,\neg x,b_1,b_2$. It follows from $\pi^*[c]=\mathcal Y$ and $\pi^*[x]=\mathcal Y$ that $\pi^*$ schedules $\neg x$, $b_1$ and $b_2$ before $c$ with decision $\mathcal N$.
 \end{proof}

\begin{lemma} $opt(G)\leq opt(I)+ (5+9L)N$.
\label{ok}\end{lemma}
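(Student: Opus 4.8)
The plan is to bound the number of $\mathcal N$ decisions in an optimal schedule $\pi^*$ by splitting $V$ into its six node classes — literal, clause, $A_i$, $B_i$, $C_i$, $D_i$ — and bounding the $\mathcal N$-count in each, so that the totals line up exactly with the accounting in Lemma~\ref{cl}.

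First I would dispose of the classes already controlled by Claims~\ref{sy} and~\ref{ce}. Claim~\ref{sy} forces every one of the $9N$ nodes in $\cup_i C_i$ to choose $\mathcal Y$ and every one of the $9LN$ nodes in $\cup_i D_i$ to choose $\mathcal N$; so these two classes contribute exactly $9LN$ to the $\mathcal N$-count, and it remains to bound the literal, $A$, $B$ and clause classes by $opt(I)+5N$. Claim~\ref{ce} tells me that for each $i$ at most one of $x_i,\neg x_i$ lies in $T^*$.

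Next I would establish a per-gadget bound: among the six nodes $\{x_i,\neg x_i,a^i_1,a^i_2,b^i_1,b^i_2\}$ at most five choose $\mathcal N$. There are two cases from Claim~\ref{ce}. If one of $x_i,\neg x_i$ chooses $\mathcal Y$, that single node already keeps the $\mathcal N$-count at most five. If both literals choose $\mathcal N$, then I observe that each $a^i_k$ has only $x_i,\neg x_i$ as neighbors, so at its scheduling time its scheduled neighbors are all $\mathcal N$ and the $\mathcal Y$-count never exceeds the $\mathcal N$-count; hence the rebel $a^i_k$ chooses $\mathcal Y$, both $A$-nodes are $\mathcal Y$, and the $\mathcal N$-count drops to at most four. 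The $B$-nodes need no special treatment — the crude bound $|B_i|=2$ suffices. Summing over the $N$ gadgets gives at most $5N$ from these three classes.

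Then I would treat the clause nodes by reading a truth assignment off $\pi^*$: set $x_i$ TRUE if $x_i\in T^*$, FALSE if $\neg x_i\in T^*$, and arbitrarily otherwise (consistent by Claim~\ref{ce}, so every literal in $T^*$ evaluates TRUE). The key point is that a clause node has only its one or two literal neighbors, so if it chooses $\mathcal N$ then by the rebel rule on at most two neighbors at least one literal neighbor was already scheduled with $\mathcal Y$ — i.e. lies in $T^*$ and is TRUE — so the clause is satisfied. Hence distinct $\mathcal N$ clause nodes witness distinct satisfied clauses, bounding the clause class by $opt(I)$. Adding $9LN+5N+opt(I)$ gives $opt(G)\le opt(I)+(5+9L)N$. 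I expect the main obstacle to be the per-gadget argument, and specifically the verification that both $A$-nodes are forced to $\mathcal Y$ when both literals choose $\mathcal N$; this rests on a careful reading of what an $A$-node sees at the moment it is scheduled, and it is exactly this saved decision that tightens the otherwise trivial bound of six down to the required five. The clause-to-satisfiability implication is the conceptual heart of the reduction, but the bounded, two-literal structure keeps its rebel analysis short.
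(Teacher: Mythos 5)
Your proposal is correct and follows essentially the same route as the paper's proof: both rely on Claims~\ref{sy} and~\ref{ce}, both force the $A_i$-nodes to $\mathcal Y$ when neither literal of a pair chooses $\mathcal Y$, and both read a truth assignment off $T^*$ to bound the $\mathcal N$-clause nodes by $opt(I)$. The only cosmetic difference is that you organize the count as a per-gadget bound of $5$ while the paper subtracts the guaranteed-$\mathcal Y$ nodes from $|V|$ globally; the totals agree.
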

\begin{proof}  {By Claim \ref{ce}, $\{1,2,\ldots,N\}$ is the disjoint union of two sets $K_1$ and $ K_2$ such that $\pi^*$ schedules  exactly one of $x_i$ and $\neg x_i$ with $\mathcal N$ for every $i\in K_1$, and schedules $x_i$ and $\neg x_i$ with $\mathcal N$ for every $i\in   K_2$. Note that $|K_1|=|T^*|$, $|T^*|+|K_2|=N$ and $\pi^*$ schedules all nodes in $\{a_1^i,a_2^i:i\in K_2\}$  with $\mathcal Y$.}

{Define a truth assignment of $I$ by setting a literal to be TRUE if and only if it belongs to $ T^* \cup\{x_i:i\in K_2\}$. Note that any clause node $y$ with decision $\pi^*[y]=\mathcal N$  must have a neighbor (which is a literal node) choosing $\mathcal Y$ under $\pi^*$. This neighbor thus belongs to $T^*$. Thus clause $y$ is satisfied by the truth assignment. It follows that $\pi^*$ schedules at most $opt(I)$ clause nodes with $\mathcal N$.
From Claim \ref{sy} we deduce that
\begin{eqnarray*}
opt(G)&\le&|V\setminus\cup_{j=1}^N C_j\setminus T^*\setminus\{a_1^i,a_2^i:i\in K_2\}\setminus\{y_j:j=1,2,\ldots,M\}|+opt(I)\\
&=&|V|-9N-|T^*|-2|K_2|-M+opt(I)\\
&=&|V|-10N-|K_2|-M+opt(I).
 \end{eqnarray*}
 It follows from $|V|\!=\!M+(15+9L)N$ that $opt(G)\!\le\!(5+9L)N+opt(I)-|K_2|$.} \end{proof}

\section{Time complexity in Theorem  \ref{th4}}\label{ap:b}
\paragraph{Preprocessing.} Initially, we set graph $H $ to be $G=(V,E)$, We find in $O(m)$ time a maximal independent set $X$ of $H$, and set $Y:=V\setminus X$.

To find $X_i,Y_i$, $i=0,1,2,\ldots,\ell$, we will modify $H$ step by step via removing some nodes (together with their incident links). At any step,
we call a node of $H$ an $X$-node (resp. a $Y$ node) if this node belongs to $X$ (resp. $Y$). In $H$, a $Y$-node is {\em critical} if it is adjacent
to a pendant $X$-node. Any single non-critical node can be removed from $H$ without destroying the $Y$-node domination of $X$-nodes.

Inductively, we consider $i=0,1,\ldots,\ell$ in this order. The $i$-th stage of the process starts with $H=G_i$ and $Y_i=\emptyset$. Subsequently,
\begin{itemize}
\item[(i)] whenever $H$ has a non-critical $Y$-node $v$, we remove $v$ from $H$, add $v$ to $Y_i$, and update $H$.
\end{itemize}
The repetition finishes when all $Y$-nodes in $H$ are critical. At that time, the $i$-th stage finishes with
\begin{itemize}
\item[(ii)] outputting $Y_i$ and $X_{i+1}$ the set of pendant nodes;
\item[(iii)] removing all nodes of $X_{i+1}$ from $H$, and updates $H$ which gives $G_{i+1}=H$.
\end{itemize}

\paragraph{Running time.} Next we show that all the above $\ell+1$ stages finish in $O(m)$ time. At the initiation step, in $O(m)$ time we  find the set of pendant $X$-nodes, and the set of  non-critical  $Y$-nodes of $H$, where $H=G$.

As our preprocessing proceeds, when we remove a $Y$-node $v$ from $H$, we update $H$ by modifying the adjacency list representation of $H$, and
 \begin{itemize}
 \item updating the degrees of all $X$-nodes;
 \item updating the set of pendant $X$-nodes (using the degrees updated);
 \item updating the set of non-critical $Y$-nodes (using the   pendant $X$-nodes created).
 \end{itemize}
These can be done in $O(d(v))$ time, where $d(v)$ is the degree of $v$ in $G$. Thus in the whole process, the removals of $Y$-nodes and their corresponding update
 in (i) take $O(\sum_{v\in Y}d(v))=O(|E|)=O(m)$ time.

When we remove all pendant $X$-nodes from $H$, we update by  modifying the adjacency list representation of $H$,
 \begin{itemize}
 \item updating the set of pendant $X$-nodes (i.e., setting it to be empty);
 \item updating the set of non-critical $Y$-nodes (i.e., enlarging it by  the unique $Y$-neighbors of the removed $X$-nodes).
 \end{itemize}
 Hence throughout removals of pendant $X$-nodes and their corresponding update in (iii) takes  $O(|X|)=O(n)$ time.

 Since throughout the process, we have an updated set of non-critical $Y$-nodes at hand, at any time, finding a non-critical node of $H$ takes $O(1)$ time.
  The overall running time of (i) is $O(m)$, so is that of  (iii).  Note that all $X_i,Y_i$, $i=1,2,\ldots,\ell$, are mutually disjoint. Hence, overall, (ii) takes $O(n)$ time. Recall that $X_0=\emptyset$.
  We have the following   results.

 \begin{lemma}
 All $X_i$ and $Y_i$, $i=0,1,\ldots,\ell$ can be found in   $O(m)$  time.\qed
 \end{lemma}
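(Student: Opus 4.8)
The plan is to show that the entire preprocessing --- the construction of all the sets $X_i$ and $Y_i$ --- can be charged against the edges of $G$ so that each edge is touched only a constant number of times, which yields the $O(m)$ bound (using $n=O(m)$). First I would fix the data structures the procedure maintains while it shrinks the working graph $H$ (initially $H=G$): an adjacency-list representation of $H$, a current-degree counter for every $X$-node, an explicit list of the pendant $X$-nodes of $H$, and an explicit list of the non-critical $Y$-nodes of $H$ (those adjacent to no pendant $X$-node). Building the maximal independent set $X$ and initializing all four structures takes a single pass over $G$, costing $O(m)$.

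The two elementary operations are: (a) deleting a non-critical $Y$-node $v$ and appending it to the current $Y_i$; and (b) deleting, at the close of stage $i$, all pendant $X$-nodes at once, which produces $X_{i+1}$ and $G_{i+1}$. For (a) I would argue the update cost is $O(d(v))$: we walk the adjacency list of $v$, decrement the degree of each $X$-neighbor, and splice $v$ out of their lists; the only status changes this can trigger are that an $X$-neighbor whose degree drops to $1$ becomes pendant, and its single surviving $Y$-neighbor then becomes critical and is struck from the non-critical list. Because $v$ is non-critical, no pendant $X$-node is destroyed, so all work stays confined to $v$'s incident edges. Since every $Y$-node is deleted at most once, summing over operation (a) gives $\sum_{v\in Y}O(d(v))=O(m)$ by the handshake bound. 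For (b), each pendant $X$-node has a single incident edge, so its deletion costs $O(1)$ and merely reclassifies its unique $Y$-neighbor as possibly non-critical; over all stages this totals $O(|X|)=O(n)$.

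Finally I would note that the remaining bookkeeping is cheap: selecting a non-critical $Y$-node to process is $O(1)$ because the non-critical list is kept current, and emitting the output sets $Y_0,\ldots,Y_\ell$ and $X_0,\ldots,X_\ell$ costs $O(n)$ since they are pairwise disjoint subsets of $V$. Adding the $O(m)$ initialization, the $O(m)$ aggregate for (a), the $O(n)$ aggregate for (b), and the $O(n)$ output cost, and invoking $n=O(m)$, the whole preprocessing runs in $O(m)$ time.

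The step I expect to be the main obstacle is the amortized accounting for operation (a): one must check that locating the surviving $Y$-neighbor of a newly pendant $X$-node is genuinely $O(1)$ (it is, since that node then has exactly one entry in its adjacency list), and that the criticality of each $Y$-node toggles in a way that can be charged to edge deletions rather than recomputed by rescanning, so that no single edge is examined more than a constant number of times across the entire run. Once this charging is made precise, the $O(m)$ total follows immediately.
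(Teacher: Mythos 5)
Your proposal is correct and follows essentially the same route as the paper's own argument in Appendix B: the same data structures (adjacency lists, $X$-node degrees, the pendant-$X$ and non-critical-$Y$ lists), the same $O(d(v))$ charge per deleted $Y$-node summing to $O(m)$, and the same $O(n)$ accounting for the pendant-$X$ removals and output. No substantive differences to report.
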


 Since $G_i\subseteq G_{i-1}$ for $i=\ell,\ell-1,\ldots,1$, the refinement of Algorithm \ref{al1}, stated in  Algorithm \ref{al2}, runs $O(n^2)$ time.

\end{document}